\documentclass[10pt, conference, letterpaper]{IEEEtran}
\usepackage{amsmath,amssymb}
\usepackage{amsthm}
\usepackage{graphicx}
\usepackage{epsfig}
\usepackage{subfigure}
\usepackage{verbatim}

\newtheorem{theorem}{Theorem}

\newtheorem{corollary}{Corollary}

\newtheorem{lemma}{Lemma}

\newcommand{\expectation}{\ensuremath{\mathbb{E}}}
\newcommand{\Tfixed}{T^{\rm F}}
\newcommand{\hfixed}{h^{\rm F}}
\DeclareMathOperator{\var}{\bf{Var}}

\begin{document}

\title{Secure Content Distribution in Vehicular Networks}

\author{
	\IEEEauthorblockN{Viet T. Nguyen, Jubin Jose, Xinzhou Wu and Tom Richardson}
	\IEEEauthorblockA{Qualcomm Research\\
	Bridgewater, NJ\\
     Email: \{ntienvie, jjose, xinzhouw,  tomr\}@qti.qualcomm.com}}

\maketitle

\begin{abstract}
Dedicated short range communication (DSRC) relies on secure distribution to vehicles of a certificate revocation list (CRL) for enabling security protocols. CRL distribution utilizing vehicle-to-vehicle (V2V) communications is preferred to an infrastructure-only approach. One approach to V2V CRL distribution, using rateless coding at the source and forwarding at vehicle relays is vulnerable to a pollution attack in which a few malicious vehicles forward incorrect packets which then spread through the network leading to denial-of-service. This paper develops a new scheme called Precode-and-Hash that enables efficient packet verification before forwarding thereby preventing the pollution attack. In contrast to rateless codes, it utilizes a fixed low-rate precode and random selection of packets from the set of precoded packets. The fixed precode admits efficient hash verification of all encoded packets. Specifically, hashes are computed for all precoded packets and sent securely using signatures. We analyze the performance of the Precode-and-Hash scheme for a multi-hop line network and provide simulation results for several schemes in a more realistic vehicular model.
\end{abstract}

\section{Introduction}\label{S: Intro}     
Security protocols designed for V2V communication \cite{DSRC} rely on the assumption of periodic distribution to all vehicles of a Certificate Revocation List (CRL) created by a certificate authority  \cite{CRL08,CRL_Kenneth08,CRL_Owen10}. Since requiring every vehicle to have internet connectivity is a significant obstacle for large-scale adoption, it is important to develop an approach that requires only a few vehicles to obtain the CRL from infrastructure and utilizes V2V communication to distribute the CRL to the rest of the vehicles.

CRL distribution in vehicular networks is a typical file distribution problem with stringent security requirements. The approach of distributing packets in a round-robin or random fashion at the source and relaying (vehicles relaying packets to other vehicles) is very inefficient. Similar to the coupon collector problem, the inefficiency arises from the delay and redundancy increase for successive innovative packets. The utilization of an efficient fountain or rateless code at the source solves this issue. Furthermore, the original file can be appended with a signature to ensure that vehicles can detect any file alteration by malicious vehicles. However, an important and challenging security concern remains, which is pollution attack. The attack involves the forwarding of incorrect packets by one or more malicious vehicles. If these malicious packets are forwarded by other non-malicious relays then the incorrect packets will spread very fast and many vehicles will not be able to decode the original file, even when there is very few malicious nodes. Hence, secure content distribution requires that vehicles forward only verified packets.

If a signature is appended to the original file, then a relay can wait until it decodes the entire file to verify all the received packets. All vehicles (except possibly malicious vehicles) do not forward packets until the file is decoded. We refer to such a scheme as \emph{Wait-to-Decode}. An important advantage is that re-encoding is possible after decoding. However, there are drawbacks: (\emph{a}) If the file is large there could be a large initial delay before relays successfully decode and begin to forward. (\emph{b}) Since incorrect packets can be forwarded by malicious vehicles to many vehicles, decoding at these vehicles could fail without error correcting methods.

Another technique to enable packet verification is to include signatures on each encoded packet after applying fountain coding at the source. The advantages of this \emph{Sign-every-Packet} scheme include the ability to verify each individual packet and allow forwarding before decoding the file. However, there are disadvantages: (\emph{a}) The overhead introduced is big and grows linearly with the number of packets in the file. (\emph{b}) The computational complexity to verify each packet is high. (\emph{c}) The relays cannot re-encode after the file is decoded as the signature can be included by only the source. 

\begin{figure}[h]
\begin{center}
\includegraphics[width=9cm]{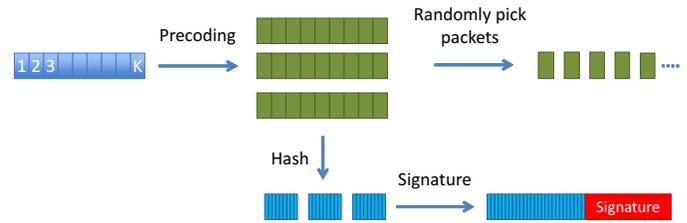}
\caption{Precode-and-Hash Scheme: A new scheme that enables efficient packet verification with low overhead}
\label{fig:precode_and_hash_scheme}
\end{center}
\end{figure}

In this paper, we develop a new scheme, \emph{Precode-and-Hash}, and characterize its performance using both analytical and simulation results. Although existing fountain codes are throughput efficient, adding packet verification into these schemes is challenging due to infinitely many potential coded packets. By using a simple precode (for example, of rate ${1}/{N}$), we limit the number of potential coded packets to $N$ times the file size at the expense of some throughput overhead. We take advantage of this to design an efficient packet verification scheme. Specifically, we separately hash all precoded packets and group them together into one or few hash-information packets. A fountain code is applied to these hash-information packets. The resulting coded hash-information packets are signed by the source and distributed in the network by forwarding, possibly with higher priority in the beginning. This new scheme is captured in Figure \ref{fig:precode_and_hash_scheme}. Once a relay has received these hash-information packets, it can verify each packet by taking its hash and comparing with the hash it received securely (i.e., hash-information packets that are signed). Since hashes are small ($20$ bytes for widely used SHA1 hash) and computationally efficient, packet verification is efficient both in terms of overhead and complexity. More details on Precode-and-Hash scheme can be found in Section \ref{S: Precode-and-hash}.

The main contributions in this paper include the following.
\begin{itemize} 
\item We analyze our scheme for multi-hop line network and prove that a speed-up factor of at least $2$ is possible over wait-to-decode scheme. 
\item We prove the convergence to fluid limit and show that a speed-up factor of $e$ is achievable. Furthermore, we show that this result is tight. 
\item We provide detailed simulations of above mentioned schemes for Boston urban-area model in Section \ref{S: Simulation}. 
\end{itemize}
Both analysis and simulations show that Precode-and-Hash is able to solve the pollution attack problem with limited overhead compared to a non-secure fountain code scheme. Hence, Precode-and-Hash is a highly desirable candidate for secure CRL distribution in vehicular networks.

\subsection{Related Work}
 Network coding - a coding paradigm that allows re-encoding at relays - have the potential to achieve the maximum information flow capacity in a network \cite{ahlswede2000network,ho2006random}. Network coding and its variants are, however, vulnerable to many security threats \cite{Jing08}, one of which is the pollution attack. Several countermeasures have been proposed to cope with this challenge. The general approach is to design hash functions that allows hash of a linearly combined packet (produced by Network Coding) to be computed as a function of the hashes of the combining packets. Some examples include homomorphic hash \cite{KFM04}, secure random checksum \cite{GR06} and the very smooth homomorphic hash \cite{LLD12}. however, the complexity of computing these hashes is still very far from being practical. We, on the other hand, design an alternative delivery scheme that is sub-optimal, compared to network coding, but provides protection against pollution attack at very low computation cost using (any) standard hash functions. 

\section{System Model}
\label{S: Sys Mod}

A source (e.g. roadside unit) has a file (e.g. certificate revocation list) that consists of $k$ packets. The source intends to send the same file to all the vehicles through a wirelessly connected (e.g. DSRC) vehicular network, where each transmission may be loss with probability $\epsilon$. A transmission is successful if the intended transmitter is within a certain range (transmission range) and all other transmitters are beyond a certain range (interference range) from the receiver. All the vehicles move around in a geographical region based on a mobility model.

We consider the following network topologies and mobility models.
\begin{itemize}
\item Multi-hop line network: This network is formed by a set of $d$ static nodes placed along a line. Node $0$ is the source while nodes $1,2,\ldots,d-1$ are relays and destinations. Node $i$ is in the range of nodes $i-1$ and $i+1$. For example, node $0$ is in range of node $1$, node $1$ is in range of nodes $0$ and $2$, and so on.
\item Boston urban-area model: This model is expected to capture real-world topology and mobility. We model majority of the roads in Boston-Cambridge area and simulate traffic on this model. Each vehicle's movement at every intersection follows a Markov chain $P =  [P_{ij}]$, where $P_{ij}$ is the probability to switch from directed road segment $i$ to directed road segment $j$. We calibrate $P$ using the real daily traffic volume data reported for each major road segment in this area \cite{BostonModel}. More details on this model are given in Section  \ref{S: Simulation}. 
\end{itemize}

\section{Precode-And-Hash} 
\label{S: Precode-and-hash}
 The content distribution scheme is depicted in Figure (\ref{fig:precode_and_hash_scheme}). It consists of two main components: 
\begin{itemize}
\item Precode: A fixed-rate (typically, low-rate) erasure code is applied as a precode. Let the rate of this precode be ${1}/{M}$ (for example, $1/3$ in Figure \ref{fig:precode_and_hash_scheme}). If a file consists of $k$ packets, there will be $Mk$ coded packets in total. We assume an optimal erasure code so the file can be successfully decoded if a receiver has \emph{any} $k$ out of $Mk$ coded packets.\footnote{A practical code could introduce a small overhead, but it is often negligible with large file size.} Since the decoding is successful whenever a vehicle receives \emph{any} $k$ coded packets, the \emph{rareness} issue is solved. More precisely, the $\log k$ overhead arising in the coupon collector problem is reduced to a constant overhead, which decreases with lower code rate (or higher number of coded packets).
\item Hash: Hashes are computed by the source for each coded packet using a sufficiently hard hash function. 
These hash packets must be distributed to the whole network before data packets are sent out. Since the number of hash-information packets is small, we expect the overhead for this initial distribution of hashes to be small. These hash make it possible for other nodes to quickly check if a coded packet is polluted. 
\end{itemize}


A node, depending on its role, performs the following operations.
\begin{itemize}
\item \emph{A source} chooses a random coded packet and broadcasts to its neighbors.
\item \emph{A relay} has to be both a receiver and a transmitter. As a transmitter, it chooses a random coded packet in the set of the coded packets that it receives and broadcast it to its neighbors.  As a receiver, it discards all polluted packets and when its buffer contains at least $k$ distinct coded packets, it can reconstruct the original file. Then, that relay can apply the precode to the original file and acts as a secondary source.
\end{itemize}

\section{Analysis of Delay Performance}
\label{SS: analysis}

We are interested in the delay performance (i.e., the time required for all the nodes in the network to successfully receive the data file) for the precode-and-hash scheme. Since quantifying this delay seems intractable for finite file size, the focus is on the asymptotic delay, i.e., the limit of the ratio between the distribution delay and file size when the file size goes to infinity. This value is a good indicator of the overhead introduced by the Precode in the precode-and-hash scheme when the file size is sufficiently large.

\subsection{Discreet Analysis}
\label{SS:discreet analysis}

At time $t$ some nodes have the file and the remainder have
a subset $H_i(t)$ of the coded packets. WLOG, we assume $H_i(t) \subset H_{i-1}(t)$ and if node $i$ has the file then so does node $i-1.$ It is then sufficient to describe the system at time $t$ by an infinite sequence of decreasing number $\mathcal{H}(t) = \{|H_{0}(t)|, |H_{1}(t)|,...\}$. It is easily seen that such a system is indeed Markovian, where $|H_i(t+1)| = |H_i(t)| + 1$ with probability $1 - \epsilon|H_i(t)|/|H_{i-1}(t)|$ and remains the same otherwise . It's also worth notice that when $ |H_i(t)| + 1 = k$, that value is substituted by $Mk$ and remains at this value as node $i$ becomes a secondary source. 
 Let $T_{n}$ be the time $t$ such that $H_n(t)$ first become $Mk$, the first observation is encapsulated in the following Lemma.

\begin{lemma}
\label{P:2 hops partial collection}
$\mathbb{P}(|H_{n+1}(T_{n})|/(k-1) \geq  0.5)$ converges to $1$ as $k \rightarrow \infty$. 
\end{lemma}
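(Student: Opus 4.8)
The plan is to reduce the statement to a one-dimensional drift (Lyapunov) estimate on the gap between node $n$'s and node $n{+}1$'s packet counts. First, a reduction: if node $n{+}1$ has itself become a secondary source by time $T_n$, then $|H_{n+1}(T_n)|=Mk\ge k-1$ and the event holds deterministically, so assume it has not. Since $H_{n+2}(t)\subset H_{n+1}(t)$, receptions at node $n{+}1$ from node $n{+}2$ are never innovative, so every packet of node $n{+}1$ came from node $n$; thus $H_{n+1}(t)\subset H_n(t)$ and, as $|H_n(t)|\le k-1$ for $t<T_n$, also $|H_{n+1}(T_n)|\le k-1$ (up to the single update possibly occurring at step $T_n$, which is negligible in the limit). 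Write $a(t):=|H_n(t)|$, $b(t):=|H_{n+1}(t)|$; by the stated Markov dynamics, conditionally on the state at $t<T_n$, $a$ increments with probability $p(t)=1-\epsilon\,a(t)/|H_{n-1}(t)|\in[1-\epsilon,1]$ and $b$ increments with probability $1-\epsilon\,b(t)/a(t)$.

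Introduce the potential $Y(t):=a(t)-(1+\epsilon)\,b(t)$, which starts at $Y(0)=0$ and has $|Y(t{+}1)-Y(t)|\le 1+\epsilon<2$. Its conditional drift, for $t<T_n$, is
\[
\mathbb{E}\!\left[Y(t{+}1)-Y(t)\mid\mathcal F_t\right]
= p(t)-(1+\epsilon)\!\left(1-\epsilon\tfrac{b(t)}{a(t)}\right)
\le \epsilon\!\left((1+\epsilon)\tfrac{b(t)}{a(t)}-1\right),
\]
which is $\le 0$ exactly when $b(t)/a(t)\le 1/(1+\epsilon)$, i.e.\ exactly on the event $\{Y(t)\ge 0\}$. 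So $Y$ is a bounded-increment process whose conditional drift is non-positive whenever it is non-negative; heuristically, it cannot wander far above $0$. (The value $1/(1+\epsilon)$ is also the fluid-limit value of the ratio $b/a$ at the instant node $n$ finishes, which explains why the threshold $0.5$ in the statement comes with a constant cushion and is not tight — consistent with the sharper speed-up results promised later.)

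To finish: since $p(t)\ge 1-\epsilon$ at every step, the time for $a$ to rise from $0$ to $k-1$ is stochastically dominated by a sum of $k-1$ i.i.d.\ Geometric$(1-\epsilon)$ variables, so $T_n\le\widetilde T:=2(k-1)/(1-\epsilon)$ with probability tending to $1$. A standard drift/concentration argument — e.g.\ bounding $\mathbb{E}[e^{\lambda Y(t)}]$ along the excursions of $Y$ above $0$ and taking a union bound over $t\le\widetilde T$ — then gives $\mathbb{P}\big[\max_{t\le\widetilde T}Y(t)\ge\delta k\big]\to 0$ for every fixed $\delta>0$. Hence, with probability $\to 1$, $Y(T_n^-)=o(k)$, i.e.\ $(k-1)-(1+\epsilon)|H_{n+1}(T_n)|\le o(k)$, so
\[
\frac{|H_{n+1}(T_n)|}{k-1}\;\ge\;\frac{1}{1+\epsilon}-o(1)\;>\;\frac12
\]
for all large $k$ since $\epsilon<1$, which is exactly the assertion. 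The one genuinely delicate point is this last concentration step: $Y$ is \emph{not} a global supermartingale (only on $\{Y\ge 0\}$), so optional stopping cannot be applied naively — one must control the excursions of $Y$ above $0$, or equivalently work with $\max(Y(t),0)$; the reduction, the drift computation, and the running-time bound are all routine.
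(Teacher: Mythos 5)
Your Lyapunov/drift strategy is a genuinely different route from the paper's (which computes the non-appearance probability of each packet explicitly, lower-bounds $\mathbb{E}[N_n]$ by $(k-1)/2$ via the telescoping product $\prod_{j=i}^{k-1}\frac{j}{j+1}=\frac{i}{k}$, proves pairwise negative dependence of the indicators to get $\var(N_n)\le k$, and finishes with Chebyshev). A drift argument of this shape can be made to work, but as written your proof rests on a misreading of the dynamics that changes the answer. You take the increment probability of $b=|H_{n+1}|$ to be $1-\epsilon\,b/a$ from the in-text Markov description; that formula is inconsistent with the model the paper actually analyzes (and with the physical setup --- it would let node $n{+}1$ keep acquiring innovative packets with probability at least $1-\epsilon$ even when it already holds everything node $n$ has). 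The appendix uses the correct transition probability $(1-\epsilon)\bigl(1-\tfrac{b}{a}\bigr)$, i.e.\ node $n$ picks a uniformly random packet from its buffer and the transmission survives erasure. Under the correct dynamics the equilibrium ratio $b/a$ is exactly $1/2$ (this is also visible in the fluid limit $h_i(t)=t/i$), not $1/(1+\epsilon)$, so your potential must be $Y=a-2b$ rather than $a-(1+\epsilon)b$; the drift is then $p_a-2p_b\le(1-\epsilon)\bigl(2\tfrac{b}{a}-1\bigr)$, which is non-positive precisely on $\{Y\ge 0\}$, but the constant cushion you invoke at the end --- ``$\tfrac{1}{1+\epsilon}-o(1)>\tfrac12$'' --- disappears. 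You only obtain $|H_{n+1}(T_n)|/(k-1)\ge \tfrac12-o(1)$. (To be fair, the paper's own Chebyshev step also only yields $\ge \tfrac12-\delta$ for each fixed $\delta$, so the corrected drift argument lands in the same place; but your claimed strict margin is an artifact of the wrong transition kernel, and the threshold $0.5$ is in fact tight, not cushioned.) The same misreading infects your running-time bound: with the correct dynamics $p_a$ is not bounded below by $1-\epsilon$ (it vanishes as $a$ approaches $|H_{n-1}|$), so $T_n=O(k)$ needs the one-hop analysis or Lemma~\ref{L: k hops arrival time} rather than a bare geometric domination. The remaining delicate point you flag yourself --- that $Y$ is a supermartingale only on $\{Y\ge 0\}$, so one must control excursions above $0$ (Azuma per excursion plus a union bound over $O(k)$ starting times) --- is real but routine. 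In short: right architecture, and arguably cleaner than the paper's negative-dependence machinery, but the numerical core must be redone against the model the paper actually uses, after which the conclusion weakens to the $\tfrac12-o(1)$ form.
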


 This Lemma shows that by the time node $n-1$ become a secondary source, node $n$ has already at least half the packets needed that are needed for decode. Hence the delivery time from node $n-1$ to node $n$ decreases even further, compared to the one hop delay from node $0$ to node $1$. The next Theorem characterizes this speed up. 
\begin{theorem}
\label{P: n hops total collection}
Assuming the relays initially have no packet, we have,
\[\mathbb{P}\left (\frac{(1-\epsilon)T_{n}}{k} \leq M\log\frac{M}{M-1}+(n-1)M\log \frac{2M-1}{2M-2}\right ) \rightarrow 1\]
 as $k \rightarrow \infty$. Hence,
\[\frac{(1-\epsilon)T_{n}}{k} \leq 1+ \frac{n-1}{2}\]
 holds with large probability for large enough $k, N$ and 
 \[\frac{(1-\epsilon)T_{n}}{nk} \leq M\log \frac{2M-1}{2M-2}\]
 holds with large probability for large enough $k, n .$
\end{theorem}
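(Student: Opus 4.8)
The plan is to split the total delay as $T_n = T_1 + \sum_{j=2}^{n}(T_j-T_{j-1})$ and to bound each single-hop ``fill time'' by a birth-chain calculation valid once the upstream node has become a full secondary source, then glue the pieces together with Lemma~\ref{P:2 hops partial collection} and the strong Markov property (I write $M$ for the precode inverse-rate, the $N$ in the statement). From time $T_{j-1}$ on, node $j-1$ holds all $Mk$ coded packets and keeps them forever, so while this holds $|H_j(\cdot)|$ is a pure birth chain: from state $\ell$ it moves to $\ell+1$ with probability $p_\ell=(1-\epsilon)(1-\ell/(Mk))$ and stays otherwise, independently across slots. Hence the time for $|H_j|$ to climb from level $\lceil\alpha k\rceil$ to $k$ (where it jumps to $Mk$) is $S_\alpha=\sum_{\ell=\lceil\alpha k\rceil}^{k-1}G_\ell$ with $G_\ell\sim\mathrm{Geometric}(p_\ell)$ independent. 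A Riemann-sum estimate gives $\mathbb{E}[S_\alpha]=\frac{1}{1-\epsilon}\sum_{\ell=\lceil\alpha k\rceil}^{k-1}\frac{Mk}{Mk-\ell}=\frac{Mk}{1-\epsilon}\log\frac{M-\alpha}{M-1}+o(k)$, and since $p_\ell\ge(1-\epsilon)(M-1)/M$ is bounded away from $0$ we have $\mathrm{Var}(S_\alpha)=O(k)=o(\mathbb{E}[S_\alpha]^2)$, so Chebyshev gives $S_\alpha\le\frac{Mk}{1-\epsilon}\log\frac{M-\alpha}{M-1}(1+o(1))$ with probability $1-O(1/k)$. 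Taking $\alpha=0$ bounds the first hop (node $0$ is a full source from $t=0$, node $1$ starts empty): $T_1\le\frac{Mk}{1-\epsilon}\log\frac{M}{M-1}(1+o(1))$ with high probability.

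To splice the hops, note that at time $T_{j-1}$ node $j-1$ becomes a full source, and Lemma~\ref{P:2 hops partial collection} (indices shifted down by one) gives $|H_j(T_{j-1})|\ge(k-1)/2$ with probability tending to $1$. On this event the strong Markov property at the stopping time $T_{j-1}$ makes the subsequent evolution of $|H_j|$ the clean birth chain above started from $|H_j(T_{j-1})|\ge(k-1)/2$; a monotone coupling with the same chain started from exactly $\lceil(k-1)/2\rceil$ (the dynamics are order-preserving, a state rising by at most one per slot and $p_\ell$ being non-increasing) gives $T_j-T_{j-1}\le S_{1/2}$. Combining with the $\alpha=\tfrac12$ case of the previous paragraph and a union bound over the $n$ events (each failing with probability $O(1/k)$, so the bound survives as long as $n=o(k)$), summation yields
\[
(1-\epsilon)T_n\le k\Big(M\log\tfrac{M}{M-1}+(n-1)\,M\log\tfrac{2M-1}{2M-2}\Big)(1+o(1))
\]
with probability tending to $1$, which is the first displayed inequality.

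For the two consequences, expand the logarithms: $M\log\frac{M}{M-1}=1+\frac{1}{2M}+O(M^{-2})\to1$ and $M\log\frac{2M-1}{2M-2}\to\tfrac12$, both from above, as $M\to\infty$. Hence for $M$ large the right-hand side of the display is at most $(1+\frac{n-1}{2})(1+o(1))$, i.e.\ the stated bound $\frac{(1-\epsilon)T_n}{k}\le1+\frac{n-1}{2}$ up to a vanishing factor; dividing instead by $n$ and then letting $n\to\infty$ kills the $\frac1nM\log\frac{M}{M-1}$ term and leaves $\frac{(1-\epsilon)T_n}{nk}\le M\log\frac{2M-1}{2M-2}$, again up to a vanishing term.

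The fluid/ODE heuristic ($\dot x=(1-\epsilon)(1-x/M)$ on each hop, giving the integrals $\int\frac{M\,dx}{M-x}$) is the easy part. The main obstacle is the splicing step: handling the conditioning at the random times $T_{j-1}$, invoking the strong Markov property to restart a \emph{clean} birth chain there, using monotone coupling to replace the random initial level $|H_j(T_{j-1})|$ by the deterministic floor $(k-1)/2$ supplied by Lemma~\ref{P:2 hops partial collection}, and making the concentration uniform enough that the union bound over all $n$ hops still delivers a high-probability statement.
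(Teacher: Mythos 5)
Your proof follows essentially the same route as the paper's. The paper organizes the argument as an induction on $n$: the base case is the one-hop fill-time computation (the geometric-sum/Riemann-integral calculation you do for $S_0$), and the inductive step writes $T_n = T_{n-1} + \Delta$, invokes Lemma~\ref{P:2 hops partial collection} to get $N_n/(k-1)\ge 1/2$ with high probability, and then computes $(1-\epsilon)\Delta/k \to M\log\frac{2M-1}{2M-2}$ via the same integral estimate you use for $S_{1/2}$. Your telescoping-sum-plus-union-bound framing is equivalent to that induction, and your explicit treatment of the strong Markov property at $T_{j-1}$ and the monotone coupling to replace the random starting level by $\lceil(k-1)/2\rceil$ makes rigorous a step the paper leaves implicit ("by an argument similar to that in Subsection One hop"). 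Your handling of the two corollary inequalities as limits up to vanishing terms (as $M\to\infty$ and $n\to\infty$) is also the correct reading of the paper's "holds for large enough $k, N$" and "for large enough $k, n$" phrasing, and you correctly identify $N$ and $M$ as the same parameter despite the paper's notational inconsistency.
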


So, the additional file decoding delay for multi-hop line network compared to single-hop grows as $(n-1)/2$ ($n$ is the number of hops) with Precode-and-Hash scheme, given the precode rate $1/M$ is low enough. In contrast, the file decoding delay for Wait-to-Decode scheme grows as $n-1$.

\subsection{Fluid limit Analysis} \label{SS:fluid}
As the file size $k\rightarrow\infty$, we can prove that the file distribution convergences sharply to  the
fluid limit (e.g. by Wormald's approach) for any finite time and length. The fluid limit model is as follows.

Let $\frac{1}{k}|H_i(t)|=h_i(t) < 1.$ The function $h_i(t)$ is non-decreasing in $t.$ and continuous except at time $\mathcal{T}_{i}$ when it reaches $1.$ At that point the node $i$ is able to decode the file and then for $t > \mathcal{T}_i$ we have $h_i(t) = M.$
For $i > 0$ and $t \not\in\{ \mathcal{T}_{i-1},\mathcal{T}_i \}$ we have $\frac{d}{dt} h_i(t) = 1 -\frac{h_i(t)}{h_{i-1}(t)}.$
For $t \in\{ \mathcal{T}_{i-1},\mathcal{T}_i \}$ we have $
\frac{d}{dt} h_i(t) = 1 -\frac{M}{h_{i-1}(t)}.$ 
Then, we can further improve the asymptotic bound on distribution delay in $n$-hop chain in Proposition \ref{P: n hops total collection}. Furthermore, the later bound is tight. The main result of this analysis is given next.

\begin{theorem}\label{T:fluid bound}
Assuming the initial condition $h_i(0)=0, i>0,$ we have,
for $M \ge 2,$
\[
\lim_{n \rightarrow \infty} \frac{\mathcal{T}_n}{n} = \Tfixed(M)
\]
where $\Tfixed(M)$ is the unique  $x \in [0,1]$ solving
\(
-\ln x = 1 - \frac{x}{M}\,.
\)
\end{theorem}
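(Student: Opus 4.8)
\emph{Proof strategy.} The plan is to derive a single scalar identity relating $\mathcal{T}_N/N$ to a logarithmic functional of the whole profile, and to read the fixed‑point equation off from it as $N\to\infty$.

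\textbf{Step 1: local structure and a priori bounds.} On $(\mathcal{T}_{n-1},\mathcal{T}_n)$ node $n-1$ is already a secondary source, so $h_{n-1}\equiv M$ there and $h_n$ solves $h_n'=1-h_n/M$ with $h_n(\mathcal{T}_{n-1})=:a_n\in[0,1)$; integrating,
\[
\mathcal{T}_n-\mathcal{T}_{n-1}=M\ln\frac{M-a_n}{M-1},
\]
hence $0<\mathcal{T}_n-\mathcal{T}_{n-1}\le M\ln\frac{M}{M-1}$, so $\mathcal{T}_N/N$ is bounded above; the a priori analysis (the increments stay in a fixed compact subinterval of $(0,\infty)$, the fluid analogue of Lemma~\ref{P:2 hops partial collection}) also gives $\mathcal{T}_N\to\infty$. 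Finally $x\mapsto-\ln x-1+x/M$ is strictly decreasing on $(0,1]$ for $M\ge2$ and changes sign, so $\Tfixed(M)$ is the unique root of $-\ln x=1-x/M$ in $(0,1)$; it therefore suffices to show that \emph{every} subsequential limit of $\mathcal{T}_N/N$ solves this equation.

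\textbf{Step 2: the master identity.} Let $\hfixed_i(t):=t/i$ for $i\ge1$ be the reference profile (the solution of the same recursion with the source level taken to be $+\infty$), which satisfies $\frac{d}{dt}\ln\hfixed_i=\frac1{\hfixed_i}-\frac1{\hfixed_{i-1}}=\frac1t$. For the true system (with $h_0\equiv M$ and $h_i\equiv M$ after node $i$ decodes) one has, off the decoding times, the telescoping identity $\frac{d}{dt}\sum_{i=1}^N\ln h_i=\sum_{i=1}^N\!\big(\frac1{h_i}-\frac1{h_{i-1}}\big)=\frac1{h_N}-\frac1M$, together with an upward jump of $\ln M$ each time a node $j\le N$ decodes. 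A one–step induction gives $h_i(t)=\frac ti-\frac{t^2}{Mi^2(i+1)}+o(t^2)$ as $t\to0^+$, so $\sum_{i\le N}\ln h_i(t)=N\ln t-\ln N!+o(1)$. Integrating the telescoping identity from $0$ to $\mathcal{T}_N$, accounting for the $N$ jumps and for $\sum_{i\le N}\ln h_i(\mathcal{T}_N^{+})=N\ln M$, and extracting the $N/t$ part of $1/h_N$, one obtains
\[
N\ln\mathcal{T}_N+J_N=\ln N!+\frac{\mathcal{T}_N}{M},\qquad J_N:=\int_0^{\mathcal{T}_N}\Big(\frac1{h_N(t)}-\frac Nt\Big)\,dt,
\]
where $J_N$ converges because the $t^2$–expansion makes the integrand tend to $\frac1{M(N+1)}$ at $t=0$. (This identity can be checked directly; for $N=1$ it is an exact equality, $\ln\mathcal{T}_1+J_1=\mathcal{T}_1/M$.)

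\textbf{Step 3: the crux $J_N=o(N)$, and conclusion.} Write $h_N(t)=\frac tN\rho_N(t)$, so $\frac{J_N}{N}=\int_0^{\mathcal{T}_N}\frac{1-\rho_N(t)}{t\,\rho_N(t)}\,dt$. The decisive estimate is that node $N$ tracks its reference trajectory until the decoding front comes within $O(1)$ nodes of it: there is a constant $T_0$ with $|\rho_N(t)-1|=O(t/N^2)$ for $t\le\mathcal{T}_N-T_0$, while $\rho_N$ stays in a fixed compact subinterval of $(0,\infty)$ on $[\mathcal{T}_N-T_0,\mathcal{T}_N]$. Granting this, the bulk $[0,\mathcal{T}_N-T_0]$ contributes $O(\mathcal{T}_N/N^2)=O(1/N)$ and the final window contributes $O\!\big(\ln\frac{\mathcal{T}_N}{\mathcal{T}_N-T_0}\big)=o(1)$, so $J_N=o(N)$. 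Dividing the master identity by $N$ and using Stirling ($\frac1N\ln N!=\ln N-1+o(1)$) gives
\[
\ln\frac{\mathcal{T}_N}{N}=-1+\frac{\mathcal{T}_N}{MN}-\frac{J_N}{N}+o(1).
\]
Along any subsequence with $\mathcal{T}_N/N\to\tau$ (which exists by Step 1, and has $\tau>0$ since $\tau=0$ would send the left side to $-\infty$ while the right side stays bounded) this yields $-\ln\tau=1-\tau/M$, hence $\tau=\Tfixed(M)$; therefore $\mathcal{T}_N/N\to\Tfixed(M)$.

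\textbf{Main obstacle.} The hard part is the uniform estimate on $\rho_N$ in Step 3: quantifying that the perturbation introduced by the finite source (equivalently, by the advancing ``$h=M$'' front) reaches node $N$ only in an $O(1)$–width boundary layer before $\mathcal{T}_N$, with relative size $O(1/N)$ outside it. I would attack this by an induction on the node index propagating a bound on $h_i(t)-t/i$ along the trajectory, coupled with a geometric decay of the front's influence with node distance — in effect a stability estimate for the recursion near its traveling‑wave solution. The remaining ingredients (the $t\to0$ expansion, Stirling, the telescoping identity, uniqueness of the root) are routine.
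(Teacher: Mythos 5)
Your master identity is correct and is a genuinely different route from the paper's. The telescoping $\frac{d}{dt}\sum_{i\le N}\ln h_i = \frac{1}{h_N}-\frac1M$ together with the $\ln M$ jumps at each $\mathcal{T}_j$ and the small-$t$ expansion $\sum_{i\le N}\ln h_i(t)= N\ln t - \ln N! + o(1)$ does yield $N\ln \mathcal{T}_N + J_N = \ln N! + \mathcal{T}_N/M$ (I checked it directly at $N=1$), and this cleanly reduces the theorem to showing $J_N=o(N)$ plus keeping $\mathcal{T}_N/N$ away from $0$. The paper proceeds very differently: it establishes monotonicity of the ODE system, works with the ratios $r_i=h_i/h_{i-1}$, introduces ``regularly ordered'' and ``fixed point convergent'' profiles, shows the round-to-round map on initial profiles is monotone and converges to a fixed point (Lemmas \ref{lem:decreasingr}--\ref{lem:wellsetiterate}), computes the fixed points explicitly, and then squeezes $\mathcal{T}_n/n$ using Lemma \ref{lem:monotonicity} and the decreasing round lengths $T^{[k]}$.

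However, Step 3 has a genuine gap, and the specific estimate you claim there appears to be false. You assert $|\rho_N(t)-1|=O(t/N^2)$ for $t\le\mathcal{T}_N-T_0$ with $T_0$ a fixed constant. But at $t=\mathcal{T}_N-T_0$ the decoding front sits at a node $m$ with $N-m$ of order one, and by the very traveling-wave picture you invoke, $h_N(t)$ is then close to the fixed-profile value $\hfixed_{N-m}$, which is strictly larger than $\hfixed_\infty=\Tfixed(M)$ by a constant amount; since $t/N\to\Tfixed(M)$, one gets $\rho_N(t)\to\hfixed_{N-m}/\Tfixed(M)>1$, an $\Omega(1)$ deviation rather than $O(1/N)$. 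To make the decomposition work you would need $T_0=T_0(N)\to\infty$ slowly with a quantitative rate of convergence of the profile behind the front to the traveling wave $\hfixed$ --- precisely the ``stability estimate near the traveling-wave solution'' that you flag as the main obstacle. That estimate is the substantive content of the theorem, and the paper's fixed-point machinery (regular ordering, the monotone renewal map, Lemma \ref{lem:monosys}) is exactly what supplies it; the master identity relocates the difficulty rather than removing it. Two smaller issues: in Step 1 you assert without proof that the increments $T^{[n]}$ are bounded away from $0$ (equivalently $a_n$ bounded away from $1$), and the closing ``$\tau>0$'' argument is circular as written, since ``the right side stays bounded'' already presumes $J_N/N$ is bounded, which is the thing to be shown.
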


Note that the file size is scaled down to $1$, hence we should have $\mathcal{T}_i \approx \frac{(1-\epsilon)T_i}{k}$ (the term $(1-\epsilon)$ accounts for lost packets). The above result can be rewritten as follows.
\begin{corollary}
For very large $k, n $, the followings hold with large probability
\[\frac{(1-\epsilon)T_{n}}{kn} = \Tfixed(M) + o(1)\]
\end{corollary}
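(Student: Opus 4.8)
\noindent The plan is to obtain the corollary by combining two facts that are already in hand: the sharp fluid-limit convergence asserted at the start of Section~\ref{SS:fluid}, which controls $\frac{(1-\epsilon)T_n}{k}$ for each \emph{fixed} $n$, and Theorem~\ref{T:fluid bound}, which identifies the large-$n$ slope of the fluid hitting times. The only genuine work is to upgrade the first fact to the quantitative statement ``$\frac{(1-\epsilon)T_n}{k}=\mathcal T_n+o_{\mathbb P}(1)$ as $k\to\infty$, for each fixed $n$''; the passage from there to the corollary is then an iterated-limit argument.

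\emph{Step 1 (fluid limit for a fixed number of hops).} Fix $n$ and consider the rescaled process $h^{(k)}_i(t):=\tfrac1k\,\big|H_i(\lfloor kt/(1-\epsilon)\rfloor)\big|$ for $0\le i\le n$. The coordinates of $(|H_0(t)|,\dots,|H_n(t)|)$ have bounded one-step increments, and, away from the hitting epochs, their conditional one-step means agree — after the $1/(1-\epsilon)$ time rescaling — with the right-hand sides of the fluid ODEs of Section~\ref{SS:fluid}. One then runs Wormald's differential-equation method on each of the finitely many intervals $(\mathcal T_{i-1},\mathcal T_i)$, patching across each epoch $\mathcal T_i$ (where $h_i$ jumps from level $1$ to level $M$ and the drift of $h_{i+1}$ switches to its secondary-source branch), to conclude that for every $\delta>0$,
\[
\mathbb P\!\left(\sup_{0\le i\le n}\ \sup_{0\le t\le \mathcal T_n+1}\big|h^{(k)}_i(t)-h_i(t)\big|\le\delta\right)\longrightarrow 1\qquad(k\to\infty).
\]
Since $h_{i-1}=M>1$ on a left-neighborhood of $\mathcal T_i$, the crossing of level $1$ by $h_i$ is transversal (slope $1-h_i/M\to 1-1/M>0$), so uniform closeness of trajectories forces closeness of the level-$1$ hitting times; hence $\frac{(1-\epsilon)T_i}{k}\to\mathcal T_i$ in probability, and in particular $\frac{(1-\epsilon)T_n}{k}=\mathcal T_n+o_{\mathbb P}(1)$.

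\emph{Step 2 (iterated limit).} By Theorem~\ref{T:fluid bound}, $\mathcal T_n/n\to\Tfixed(M)$, so given $\delta>0$ there is $n_0$ with $|\mathcal T_n/n-\Tfixed(M)|<\delta/2$ for all $n\ge n_0$. For each such $n$, Step~1 furnishes $k_0(n,\delta)$ so that $\mathbb P\big(|\tfrac{(1-\epsilon)T_n}{k}-\mathcal T_n|\le n\delta/2\big)\ge 1-\delta$ whenever $k\ge k_0(n,\delta)$; on that event
\[
\Big|\tfrac{(1-\epsilon)T_n}{kn}-\Tfixed(M)\Big|\le\Big|\tfrac{(1-\epsilon)T_n}{kn}-\tfrac{\mathcal T_n}{n}\Big|+\Big|\tfrac{\mathcal T_n}{n}-\Tfixed(M)\Big|\le\delta .
\]
Letting $n\to\infty$ and $k\to\infty$ with $k\ge k_0(n,\delta)$, and since $\delta>0$ is arbitrary, this is exactly $\frac{(1-\epsilon)T_n}{kn}=\Tfixed(M)+o(1)$ with probability tending to $1$.

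\emph{Anticipated obstacle.} The difficulty is concentrated in Step~1: the fluid vector field is only piecewise smooth (it is discontinuous at the hitting epochs, and the denominator $h_{i-1}$ degenerates on the initial segment where $h_i\equiv 0$), so the standard Wormald hypotheses must be verified separately on each piece and the errors propagated by induction on $i$. The early degeneracy is benign — $h_i$ stays identically $0$ until $h_{i-1}$ becomes positive, after which $h_{i-1}$ is bounded away from $0$ along the solution, making the field locally Lipschitz there — and the essential point at each epoch is the transversal crossing noted above, which simultaneously lets the process be restarted after $\mathcal T_i$ with a good initial condition and converts trajectory error into hitting-time error. Because $n$ is held fixed before $k\to\infty$, only finitely many error terms accumulate and no uniformity in $n$ is required in Step~1; all $n$-dependence is absorbed into the outer limit of Step~2.
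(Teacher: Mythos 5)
Your argument is correct and is exactly the route the paper intends: the paper states the corollary without proof as a rewriting of Theorem~\ref{T:fluid bound} via the fluid-limit approximation $\mathcal{T}_i \approx \frac{(1-\epsilon)T_i}{k}$, whose validity it asserts (``by Wormald's approach'') at the start of Section~\ref{SS:fluid}. Your Step~1 simply fills in that asserted Wormald convergence and the hitting-time transversality, and your Step~2 is the iterated limit the paper leaves implicit.
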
  
 Compare to Proposition \ref{P: n hops total collection}, the asymptotic bound decreases by a factor at least $e/2$ and can be up to $1.74$ (when $M=2$).

\section{Simulation Results \& Comparison}
\label{S: Simulation} 

\subsection{Scenarios \& Parameters}\label{SS: Sim setting}

The primary application of interest is CRL (certificate revocation list) distribution in vehicular networks. Each vehicle is securely identified by its pseudonym ID, which is issued by the Certification Authority (CA). For privacy reasons (e.g., preventing tracking), each vehicle is expected to change its pseudonym periodically (e.g., once every $10$ minutes). Each vehicle
is equipped with sufficient pseudonyms for long term operation.
If the CA distrusts a vehicle, then it has to revoke all pseudonyms associated with that vehicle. This information is distributed to all other vehicles in a CRL file. CRL information corresponding to each vehicle is expected to be around 40 kB or higher. Hence, assuming tens of vehicles in the list, CRL file size is expected to be few MB. In our simulation, we assume that the CRL file size is $1$ MB which is split into $1000$ packets of $1000$ bytes each. 



We model V2V wireless communication using the following abstractions. The physical layer is modeled as an erasure channel with a packet erasure probability of $\epsilon=0.05$. The medium access layer (MAC) is modeled by slotted carrier sensing (CSMA/CA). A transmission is successful if the intended transmitter is within a certain range (transmission range of $200$ m) and all other transmitters are beyond a certain range (interference range of $300$ m) from the receiver.A vehicle can transmit $20$ packets each time slot if it is elected by CSMA/CA. 

\begin{figure}
\begin{center}
\subfigure[Satellite map of Boston urban-area]{
\hspace{0.3cm}\includegraphics[width=0.4\textwidth]{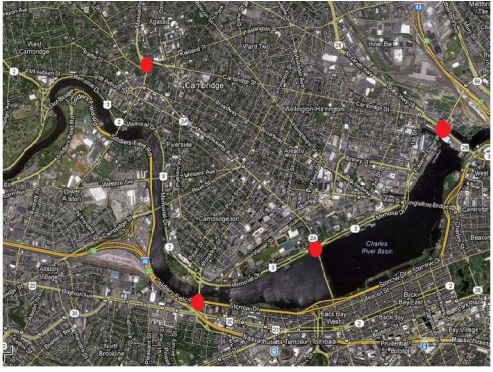}
}
\subfigure[Road map of Boston urban-area]{
\includegraphics[width=0.46\textwidth]{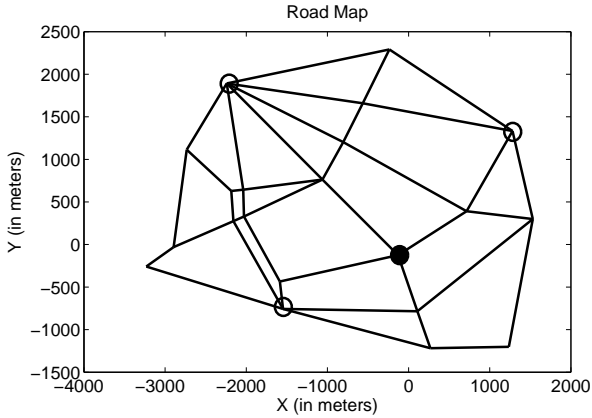}
}
\end{center}
\caption{Map of Boston urban-area}
\label{BostonMap}
\end{figure}

 We consider $236$ vehicles moving with average velocity of $20 m/s$, where each vehicle has a time-invariant velocity randomly (uniform) selected from $[15,25]m/s$. There are four sources at fixed locations as shown in Figure \ref{BostonMap}. The sources continuously seed the file for a time period equal to $2-5$ times the time to seed the given file size. For a fair comparison between different schemes, the seeding time is defined with respect to the original file size instead of the actual number of source packets. The seeding rate, i.e., the transmission rate of sources, is chosen to be $60$ packets per second.

\subsection{Distribution Schemes and Security Overhead}\label{SS: Security Overhead}
We recall that the scheme developed in this paper for secure content distribution is Precode-and-Hash.

 The hash-information packets are encoded and distributed using the Sign-every-Packet scheme. Each arriving packet is classified as, authentic, polluted or unknown (hash not received). Each vehicle is allowed to forward either a coded hash-information packet or an authentic coded data packet. Initially, a vehicle transmits coded hash-information packets for a fixed duration depending on the total size of the hashes. After this initial phase, a vehicle sends coded hash-information packets with probability $0.2$ and coded data packet with probability $0.8.$. 

The intuition is that, a vehicle initially has more coded hash packets than coded data packets, so it will send coded hash packets first to ensure that the hashes are distributed ``ahead'' of the data. After that, we reserve enough capacity to distribute the hashes by choosing the right hash forwarding probability ($0.2$), which is fine-tuned by experiments.



 As described in Section \ref{S: Intro}, two other schemes that could provide protection against pollution attack are Wait-to-Decode and Sign-every-Packet. 

  The security overhead associated with each scheme is:
\begin{itemize}
\item Wait-to-Decode: A signature is appended to the whole CRL file. The signature size is $256$ bytes for RSA-SHA2. Hence we can neglect this overhead.
\item Sign-every-Packet: Each packet reserves $256$ bytes for signature. The number of packets increases to  $\lceil10^6/(1000 -256)\rceil = 1344$. The overhead (in terms of packets) is about $34\%$.
\item Precode-And-Hash: Each hash is $20$ bytes in size.  As each hash-information packets is individually signed after the fountain coded is applied, each packet contains at most $\lfloor(1000 - 256) / 20\rfloor = 37$ hashes and there are at least $1000M/37 \approx 27M$ such packets.  The overhead is $8 \%$ with $M=3$, $11 \%$ with $M=4$ and $13 \%$ with $M=5$. In our simulations, $M=3.$
\item Genie Precode: This schemes use a rateless Precode instead of a finite rate one. It is considered here for comparison of data (excluding hash) delivery delay. 
\end{itemize}

%

\begin{figure}[h]
\begin{center}
\includegraphics[width=9cm]{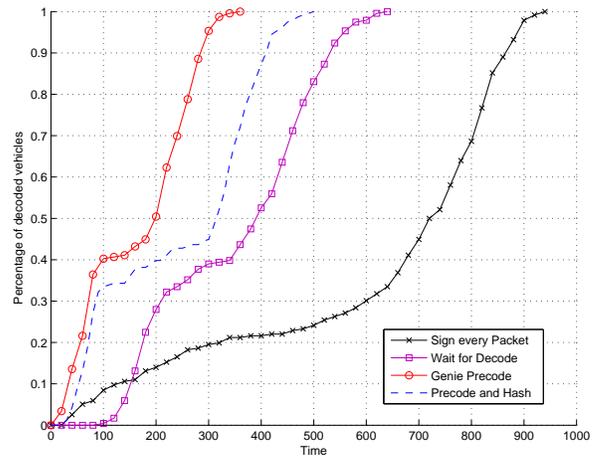} 
\caption{Performance comparison: Boston urban-area model, seeding time enough to transmit $5$ times the file size}
\label{fig: Boston 1000 2}
\end{center}
\end{figure}

\begin{figure}[h]
\begin{center}
\includegraphics[width=9cm]{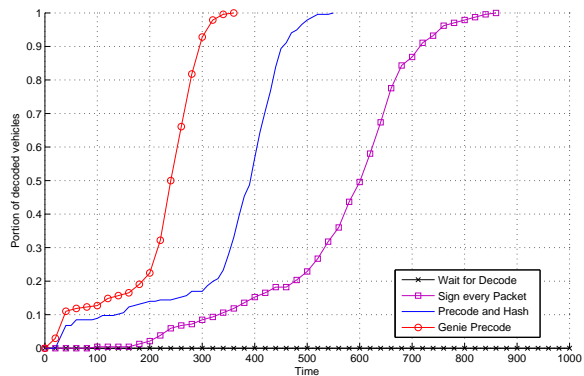} 
\caption{Performance comparison: Boston urban-area model, seeding time enough to transmit $2$ times the file size}
\label{fig: Boston 1000 5}
\end{center}
\end{figure}
\subsection{Simulation Results}\label{SS: Result}
There are four sources that seed the file. The seeding period equal to $5$ times the duration to seed the file size (Figure \ref{fig: Boston 1000 5}). Next, to understand the performance of different schemes with smaller seeding time, we reduce the seeding period to $2$ (Figure \ref{fig: Boston 1000 2}). In all simulations, we plot the fraction of vehicles that has decoded the file as a function of time. Based on the result, the following remarks is made.
\begin{itemize}
\item Precode-and-Hash out-performs both Sign-every-Packet and Wait-for-Decode in the scenarios considered. Wait-for-Decode has very large file distribution delays. 
\item Precode-and-Hash is $20 \%$ to $30 \%$ worse than Genie Precode. This is expected since we reserve $20\%$ capacity for hash distribution.  
\item In Figure \ref{fig: Boston 1000 5}, Wait-for-Decode distribution and Sign-every-Packet has a crossover due to smaller file size for Wait-for-Decode (recall that files size in Sign-every-Packet is $34\%$ larger than other schemes). However, Sign-every-Packet is still better compared to Wait-for-Decode for majority of the vehicles.
\item Both Wait-for-Decode and Sign-every-Packet are very sensitive to seeding parameters while Precode-and-Hash is more robust. In Wait-for-Decode, if no relay decodes file during the seeding period, no one can start re-seed the file after that. Hence, the file is not distributed at all. In Sign-every-Packet, as relays can only forward coded packets that they receive, only a small subset of coded packets can be distributed if the seeding time is not long enough. The decoding is sub-optimal as the innovative information is limited.  In Figure \ref{fig: Boston 1000 2}, when the seeding time is reduced to $2$, (a) Wait-for-Decode does not spread the file, (b) the file distribution time of Sign-every-Packet increases considerably while (c) the file distribution time of Precode and Hash only increases by $10\%$.  
\end{itemize}

\subsection{Practical Discussion}
 In the Precode and Hash scheme, the first practical choice one has to make is the Precode rate. This involves balancing the data delivery time and the capacity reserved for hash distribution as discussed in \ref{SS: Security Overhead}. Note that the capacity reserved for hash distribution is much larger than the hash size itself since the hash distribution is much less efficient than data distribution

 Another observation is that the speed-up in
file propagation, which can be up to $e$ comes at the cost of an infinite ratio of wasteful transmissions (i.e, the transmitted packet is already in the receiver buffer). As in wireless network environment, extraneous transmissions increases the contention and interference (and hence increases packet loss probability) significantly, this problem may become very serious. 

One way to avoid this issue is to let node forward packets with a probability proportioned to its buffer size. An analysis of this proportioned forwarding scheme is provided in Appendix \ref{A:proportion forwarding}. The result shows that we still achieve a speed up factor at least $1.66$ by using much less forwarding.
\section{Conclusion} 
The Precode-and-Hash scheme developed in this paper provides protection against pollution attacks with a minimal increase in file delivery time. Hash verification is very efficient compared to signature verification. Since the file distribution delay is limited by physical mobility, the increase in delay of file distribution due to the use of finite-rate precode is minimal. However, the hash distribution delay can be significant. We expect that the net delay can be further improved with more optimized hash distribution. In summary, the Precode-and-Hash scheme is an attractive candidate for secure content distribution in vehicular networks.

\bibliography{ref}{}
\bibliographystyle{IEEEtran}

\appendix

\subsection{Proof of Proposition \ref{P:one-hop}}
From Lemma \ref{L: Geo step delay} we have
\begin{align*}
&\mathbb{E}\left [\frac{t_{1,k}}{k}\right ] = \frac{1}{k}\sum_{i=0}^{k-1}\mathbb{E}\left [t_{1,i+1} - t_{1,i}\right ]
= \sum_{i=0}^{k-1}\frac{N}{(kN-i)(1-\epsilon)} 
\\&=\frac{N}{1-\epsilon}  \frac{1}{kN} \sum_{i=0}^{k-1}\frac{1}{(1-\frac{i}{kN})}
= \frac{N}{1-\epsilon}\int_{0}^{1/N}\frac{1}{1-t}dt + O(k^{-1})\\
&=\frac{N}{1-\epsilon}\log(\frac{N}{N - 1}) + O(k^{-1}). 
\end{align*}
To show convergence in probability we show that the variance 
of $\frac{t_{1,k}}{k}$ tends to $0.$
Now, since $t_{1,i+1} - t_{1,i}$ are independent random variables,
\begin{align*}
\var(t_{1,k})&=\sum_{i=0}^{k-1}\var(t_{1,i+1}- t_{1,i})=\sum_{i=0}^{k-1}p_i^{-2}(1-p_i)\\
\le \sum_{i=0}^{k-1}p_i^{-2} &= 
\frac{kN}{(1-\epsilon)^2}\frac{1}{kN}\sum_{i=0}^{k-1}\frac{1}{(1-\frac{i}{kN})^2}\\
			&= \frac{kN}{(1-\epsilon)^{2}}\int_{0}^{\frac{1}{N}}\hspace{-.2cm}\frac{1}{(1-t)^{2}}dt + o(1)\\
			& = \frac{kN}{(1-\epsilon)^2 (N-1)}  + o(1).			
\end{align*}
It now follows, e.g. from Chebyshev's inequality, that $\frac{t_{1,k}}{k}$ converges in probability to $\frac{N}{1-\epsilon}\log(\frac{N}{N - 1})$. We use the fact that $\log(1+x)\le x$ for $x>-1$ for the bound.

\subsection{Proof of Subsection \ref{SS:discreet analysis}} \label{A: proof of Multihop}  
The following lemmas are needed in the proof of Proposition \ref{P:2 hops partial collection}.

First, we need to characterize the distribution of $I_{n,i}$.
Let us introduce the notation $\partial t_{n,i} = t_{n,i+1}- t_{n,i}$
and ${\cal T}_{n,k}=(t_{n,1},\ldots,t_{n,k}).$

\begin{lemma}\label{L: appearance prob.}
Conditioned on ${\cal T}_{n-1,k}$, the r.v.s $I_{n,i},\ i=1,\ldots,k$, are $k$ pairwise negatively dependent $\{0,1\}$-values r.v.s with conditional distribution
\begin{align}
\nonumber &\mathbb{P}(I_{n,i}=0\,|\,{\cal T}_{n-1,k})
=\prod_{j=i}^{k-1}\left (1-\frac{1-\epsilon}{j}\right )^{\partial t_{n-1,j}}.\label{E: appearance prob.}
\end{align}
\end{lemma}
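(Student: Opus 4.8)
\noindent\emph{Proof strategy.}
The plan is to read $I_{n,i}$ as the indicator that, by the instant $t_{n-1,k}=T_{n-1}$ at which node $n-1$ decodes, node $n$ has already received the $i$-th distinct coded packet acquired by node $n-1$, and to argue on a product probability space in which the randomness upstream of the link $(n-1)\to n$ (erasures on links $0\to1,\dots,(n-2)\to(n-1)$ and transmit choices of nodes $0,\dots,n-2$) is kept separate from the randomness on the link $(n-1)\to n$ itself. For the latter I would use, for each time slot $s$, an independent uniform $U_s$ --- when node $n-1$ currently holds $j$ packets it broadcasts the one of acquisition rank $\lceil j U_s\rceil$ --- together with an independent $\mathrm{Bernoulli}(\epsilon)$ erasure $E_s$; since neither $U_s$ nor $E_s$ influences node $n-1$'s own reception, neither influences ${\cal T}_{n-1,k}$. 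Conditioning on ${\cal T}_{n-1,k}$ thus fixes, for every slot $s$, the buffer size $j_s$ of node $n-1$ (which is then exactly the packets of ranks $1,\dots,j_s$) while leaving the pairs $(U_s,E_s)$ i.i.d.\ with their unconditional laws, and the number of slots with $j_s=j$ equals $\partial t_{n-1,j}$.

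First I would establish the marginal. Packet $i$ can reach node $n$ only in a slot with $j_s\ge i$, and in such a slot it does so iff node $n-1$ picks it (probability $1/j_s$, as it is one of the $j_s$ held packets) and the slot is not erased (probability $1-\epsilon$), these events being independent across slots once ${\cal T}_{n-1,k}$ is fixed. Hence $\{I_{n,i}=0\}$ is the intersection, over the $\partial t_{n-1,j}$ slots of each buffer size $j\in\{i,\dots,k-1\}$, of events of probability $1-\tfrac{1-\epsilon}{j}$, and conditional independence gives $\mathbb{P}(I_{n,i}=0\mid{\cal T}_{n-1,k})=\prod_{j=i}^{k-1}\bigl(1-\tfrac{1-\epsilon}{j}\bigr)^{\partial t_{n-1,j}}$ (for $i=k$ this product is empty, consistent with $I_{n,k}=0$).

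For pairwise negative dependence, fix $i<i'$ and split the relevant slots into those with $i\le j_s<i'$ (node $n-1$ does not yet hold packet $i'$, so only packet $i$ can be delivered) and those with $j_s\ge i'$ (both held). Conditional independence of the slots makes the factors from the first group appear identically in $\mathbb{P}(I_{n,i}=0,I_{n,i'}=0\mid{\cal T}_{n-1,k})$ and in $\mathbb{P}(I_{n,i}=0\mid{\cal T}_{n-1,k})$, so after cancellation the desired inequality reduces to $\prod_{j=i'}^{k-1}\bigl((1-\tfrac{2(1-\epsilon)}{j})/(1-\tfrac{1-\epsilon}{j})^{2}\bigr)^{\partial t_{n-1,j}}\le 1$. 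In a single slot with $j\ge i'$ the events ``node $n-1$ transmits packet $i$'' and ``node $n-1$ transmits packet $i'$'' are mutually exclusive, so the probability that neither packet is delivered is $1-\tfrac{2(1-\epsilon)}{j}$, which lies in $[0,1]$ because $j\ge i'\ge 2\ge 2(1-\epsilon)$; the identity $(1-x)^2-(1-2x)=x^2\ge 0$ then makes every base of that product at most $1$. This yields $\mathbb{E}[I_{n,i}I_{n,i'}\mid{\cal T}_{n-1,k}]\le\mathbb{E}[I_{n,i}\mid{\cal T}_{n-1,k}]\,\mathbb{E}[I_{n,i'}\mid{\cal T}_{n-1,k}]$, i.e.\ $\mathrm{Cov}(I_{n,i},I_{n,i'}\mid{\cal T}_{n-1,k})\le 0$.

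The crux is not the two short computations but the measure-theoretic bookkeeping: one must set up the probability space so that ``conditioning on ${\cal T}_{n-1,k}$'' provably leaves the slot variables on link $(n-1)\to n$ i.i.d.\ with their original laws --- in particular that node $n-1$'s uniform choice among its currently held packets is independent of how and when those packets were acquired --- and one must fix a convention for the borderline slot in which node $n-1$ both receives its last needed packet and relays to node $n$. Once the lemma is available, $|H_n(T_{n-1})|=\sum_i I_{n,i}$ is a sum of conditionally pairwise negatively dependent indicators, and a Chebyshev bound built on the covariance inequality above supplies the concentration needed for Lemma \ref{P:2 hops partial collection}.
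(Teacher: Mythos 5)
Your proposal is correct and follows essentially the same route as the paper: the marginal comes from multiplying per-slot non-delivery probabilities $1-\tfrac{1-\epsilon}{j}$ over the $\partial t_{n-1,j}$ slots with buffer size $j$, and pairwise negative dependence comes from splitting the slots at the arrival of the later packet and using $1-\tfrac{2(1-\epsilon)}{j}\le\bigl(1-\tfrac{1-\epsilon}{j}\bigr)^2$ slot by slot. The only difference is that you make explicit the conditional-independence bookkeeping (the product space with per-slot uniforms and erasures) that the paper leaves implicit, which is a welcome but not substantive addition.
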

\begin{proof} 
By definition, packet  $i$ is available at node $n-1$ starting from time $t_{n-1,i}.$
For each subsequent time the probability that packet $i$ is not transmitted to the second hop is $1-\frac{1-\epsilon}{j},$ hence the stated distribution holds.

First we note that for any finite random variables $X,Y$ 
\(
\mathbb{E} (X Y)-\mathbb{E}( X) \mathbb{E} (Y)
=\mathbb{E} (1-X)(1-Y) -\mathbb{E} (1-X)\mathbb{E}(1-Y)
\)
so we show negative dependence of  $\{1-I_{n,i},\ i=1,\ldots,k\}.$
Let $i_{1}<i_{2}$, then
\begin{align*}
&\mathbb{E}[(1-I_{n,i_1})(1-I_{n,i_2})\,|\,{\cal T}_{n-1,k}]\\
&=\mathbb{P}(I_{n,i_1}=0 \textnormal{ and } I_{n,i_2}=0\,|\,{\cal T}_{n-1,k}).
\end{align*}
The above expression is the probability that neither $i_{1}$ nor $i_{2}$ are transmitted to the second hop at any time from $1$ to $t_{n-1,k}$.
Packet $i_1$ is available from time $i_1$
and both packets $i_1$ and $i_2$ are available from time $i_2.$
After this time, the probability that neither is successfully transmitted in a given slot while
$j$ packets are available is 
\(
1-2\frac{1-\epsilon}{j} \le 
(1-\frac{1-\epsilon}{j})^2\,.
\)
Thus we obtain
\begin{align*}
&\mathbb{P}(I_{i_{n,1}}=0 \textnormal{ and } I_{i_{n,2}}=0\,|\,{\cal T}_{n-1,k})\\
&=\prod_{j=i_{1}}^{i_{2}-1}\left (1-\frac{1-\epsilon}{j}\right )^{\partial t_{n-1,j}}
\prod_{j=i_{2}}^{k-1}\left (1-2\frac{1-\epsilon}{j}\right )^{\partial t_{n-1,j}}\\
&\leq\prod_{j=i_{1}}^{i_{2}-1}\left (1-\frac{1-\epsilon}{j}\right )^{\partial t_{n-1,j}}
\prod_{j=i_{2}}^{k-1}\left (1-\frac{1-\epsilon}{j}\right )^{2(\partial t_{n-1,j})}\\
&=\prod_{j=i_{1}}^{k-1}\left (1-\frac{1-\epsilon}{j}\right )^{\partial t_{n-1,j}}
\prod_{j=i_{2}}^{k-1}\left (1-\frac{1-\epsilon}{j}\right )^{\partial t_{n-1,j}}\\
&=\mathbb{P}(I_{n,i_{1}}=0 \,|\,{\cal T}_{n-1,k})
\mathbb{P}(I_{n,i_{2}}=0 \,|\,{\cal T}_{n-1,k}). \mbox{ a.s. }
\end{align*}
This completes the proof.
\end{proof}

The next step is to characterize the distribution of $t_{n-1,1},\ldots,t_{n-1,k}$. 

\begin{lemma}\label{L: k hops arrival time}
For any $k$, there exist $k-1$ independent geometric r.v.s $\tau_{1},\ldots,\tau_{k-1}$ of parameters $\frac{(1-\epsilon)(kN-1)}{kN},\ldots,\frac{(1-\epsilon)(kN-k+1)}{kN} $ respectively such that $t_{n,j+1}-t_{n,j}\geq \tau_{j}$ a.s. for $j=1,\ldots,k-1$. 
\end{lemma}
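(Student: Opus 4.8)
The plan is to exhibit an explicit slot-by-slot coupling between node $n$'s true packet-acquisition process and a family of independent geometric clocks. First I would record the structural facts supplied by the model: under the WLOG nesting convention $H_n(s)\subseteq H_{n-1}(s)$ for every slot $s$, and since there are only $kN$ coded packets in total, $|H_{n-1}(s)|\le kN$. Call $(t_{n,j},t_{n,j+1}]$ the \emph{$j$-th epoch}, i.e.\ the stretch of slots during which node $n$ holds exactly $j$ distinct packets (node $n-1$ transmits at most one packet per slot, so each epoch is nonempty and this decomposition is legitimate once node $n$ reaches $j$ packets). At a slot $s$ in that epoch, node $n-1$ holds $B_s:=|H_{n-1}(s)|$ packets, with $j\le B_s\le kN$, and node $n$ gains its $(j+1)$-st packet precisely when node $n-1$'s uniformly chosen packet is both delivered (probability $1-\epsilon$) and not already held by node $n$ (probability $(B_s-j)/B_s$), i.e.\ with probability
\[
p_{j,s}\;=\;(1-\epsilon)\Bigl(1-\frac{j}{B_s}\Bigr)\;\le\;(1-\epsilon)\,\frac{kN-j}{kN}\;=:\;q_j ,
\]
the inequality being the only place the bound $B_s\le kN$ enters (and $j\le B_s$ makes $p_{j,s}$ a genuine probability).

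Next I would set up the coupling. Introduce, independently of all other randomness, an array $\{U_{j,m}:1\le j\le k-1,\ m\ge1\}$ of i.i.d.\ $\mathrm{Uniform}[0,1]$ variables, and realize the network so that the $m$-th slot of the $j$-th epoch yields a fresh packet for node $n$ exactly when $U_{j,m}\le p_{j,s}$ at that slot. Set $\tau_j:=\min\{m\ge1:U_{j,m}\le q_j\}$. By construction $\tau_j$ is geometric with parameter $q_j$, and since distinct values of $j$ consume disjoint, independent blocks of the array, $\tau_1,\dots,\tau_{k-1}$ are mutually independent with exactly the stated parameters. Finally, because $p_{j,s}\le q_j$ in every slot, every slot that is a success under the true threshold $p_{j,\cdot}$ is also a success under the threshold $q_j$; hence the first $q_j$-success cannot occur later than the first true success, which yields $\tau_j\le t_{n,j+1}-t_{n,j}$ almost surely, as claimed.

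The step I expect to need the most care is the independence assertion, not the pointwise domination. The epochs are themselves random and nested, so one must check that the randomness defining $\tau_j$ is genuinely fresh for each $j$; this is why I pre-allocate one uniform per (epoch index, within-epoch slot) pair, making $\tau_j$ a measurable function of $\{U_{j,m}\}_{m\ge1}$ alone. For the same reason the joint law of $(\tau_1,\dots,\tau_{k-1})$ is unaffected by conditioning on the trajectories of nodes $0,\dots,n-1$ (the thresholds $q_j$ are deterministic constants), so the product-of-geometrics conclusion holds unconditionally. As a consistency check, note that for $n=1$ one has $B_s\equiv kN$, hence $p_{j,s}=q_j$ and $t_{1,j+1}-t_{1,j}$ is itself exactly geometric, matching the one-hop computation used in the proof of Proposition~\ref{P:one-hop}.
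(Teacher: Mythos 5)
Your proof is correct and, if anything, more careful than the paper's own. Both arguments hinge on the identical per-slot bound: writing $B_s = |H_{n-1}(s)|$ and using $j \le B_s \le kN$, the success probability at a slot of the $j$-th epoch satisfies $p_{j,s} = (1-\epsilon)(B_s - j)/B_s \le (1-\epsilon)(kN-j)/kN = q_j$, so that each increment $t_{n,j+1}-t_{n,j}$ stochastically dominates a geometric of parameter $q_j$; the paper obtains exactly this by conditioning on the $H$-trajectory and plugging in $H \le kN$. Where you diverge is in the independence step. The paper claims that the increments $t_{n,j+1}-t_{n,j}$ are themselves independent, citing the strong Markov property and the fact that the $t_{n,j}$ are increasing stopping times; taken literally that claim is not right, since $t_{n,j+1}-t_{n,j}$ depends on the random state $|H_{n-1}(t_{n,j})|$, which is correlated with $t_{n,j}$. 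What the strong Markov property together with the state-uniform lower bound $q_j$ actually yields is precisely the lemma's existential conclusion, and your per-epoch uniform-array coupling is the missing construction: each $\tau_j$ is a measurable function of a disjoint block $\{U_{j,m}\}_m$ of fresh uniforms, independent across $j$ and independent of the $(n-1)$-hop randomness that fixes the thresholds $p_{j,s}$, so the $\tau_j$ are mutually independent geometrics with the stated parameters, while the pathwise comparison $p_{j,s}\le q_j$ forces $\tau_j \le t_{n,j+1}-t_{n,j}$ almost surely. In short, same domination inequality, but your route makes explicit and rigorous the coupling that the paper's proof glosses over.
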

\begin{proof}Let $\mathcal{E}:=\{1,\ldots,k\}^{n}$. The $k$-tuple comprised of the number of packets collected by the $k$ hops at each time slot form a Markov chain with state space $\mathcal{E}$. The fact that $t_{k,j+1}-t_{k,j},\, j=1,\ldots,M-1$ are independent comes from the strong Markov property and from the fact that $t_{n,j},j=1,\ldots,k$ are $k$ increasing stopping times of this Markov chain. To complete the proof, we now need to show that
\begin{align*}
\mathbb{P}(t_{n,j+1}-t_{n,j}>l) \geq \left (1-\frac{(1-\epsilon)(kN-j)}{kN}\right )^{l}.  
\end{align*}  
In particular, let $H_{t_{n,j}}, H_{t_{n,j}+1}, \ldots, H_{t_{n,j}+l-1}$ be the number of coded packets at the $n-1$ hop at time $t_{n,j},\ldots,t_{n,j}+l-1$. Note that the set of packets received at node $n$ at any time must be a subset of the set of packets received at node $n-1$. At slot $t_{n,j}+i$, a new packet is received at node $n$ iff node $n-1$ chooses one of the $H_{t_{n,j}+i}-j$ that node $n$ does not have and the transmission is successful. The probability of this event is $\frac{(H_{t_{n,j}+i}-j)(1-\epsilon)}{H_{t_{n,j}+i}}$. So, 
\begin{align*}
&\mathbb{P}(t_{n,j+1}-t_{n,j}>l\,|H_{t_{n,j}}, H_{t_{n,j}+1}, \ldots, H_{t_{n,j}+l-1})\\
& = \prod_{i=0}^{l-1}\left (1-\frac{(H_{t_{n,j}+i}-j)(1-\epsilon)}{H_{t_{n,j}+i}}\right ).
\end{align*} 
On the other hand $H_{t_{n,j}+i} \leq kN$ a.s. for $i=\overline{0,l-1}$ a.s. Hence,
\begin{align*}
&\mathbb{P}(t_{n,j+1}-t_{n,j}>l)\\
&= \mathbb{E}[\mathbb{P}(t_{n,j+1}-t_{n,j}>l\,|H_{t_{n,j}}, H_{t_{n,j}+1}, \ldots, H_{t_{n,j}+l-1})]\\
&\geq \left (1-\frac{(kN-j)(1-\epsilon)}{kN}\right )^{l}.
\end{align*}
This completes the proof.
\end{proof}

Two r.v.s $X$ and $Y$ are negatively dependent iff $\mathbb{E}[XY] \leq \mathbb{E}[X]\mathbb{E}[Y]$.
\begin{lemma}\label{L: Negative dependence condition}
Two r.v.s $X, Y$ are negatively dependent if $c-X, c-Y$ are negatively dependent, where $c$ is any real number.
\end{lemma}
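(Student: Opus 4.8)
The plan is to reduce the claim to the algebraic identity already exploited in the proof of Lemma~\ref{L: appearance prob.}, namely that for finite random variables $X,Y$ and any constant $c$ one has
\[
\mathbb{E}(XY)-\mathbb{E}(X)\mathbb{E}(Y)=\mathbb{E}\big[(c-X)(c-Y)\big]-\mathbb{E}(c-X)\,\mathbb{E}(c-Y).
\]
Granting this identity the lemma is immediate: if $c-X$ and $c-Y$ are negatively dependent then, by the definition recalled just above the statement, the right-hand side is $\le 0$; hence the left-hand side is $\le 0$, which is exactly the assertion that $X$ and $Y$ are negatively dependent.

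To establish the identity I would simply expand $(c-X)(c-Y)=c^{2}-cX-cY+XY$, take expectations using linearity, and compare with $\mathbb{E}(c-X)\,\mathbb{E}(c-Y)=c^{2}-c\,\mathbb{E}(X)-c\,\mathbb{E}(Y)+\mathbb{E}(X)\mathbb{E}(Y)$. Subtracting, the three $c$-dependent terms cancel and what remains on both sides is precisely $\mathbb{E}(XY)-\mathbb{E}(X)\mathbb{E}(Y)$, as claimed.

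There is essentially no obstacle here; the only point requiring a word of care is integrability, so that all the expectations involved are well defined and linearity may be applied termwise. In the intended application (Lemma~\ref{L: appearance prob.}) the variables are bounded $\{0,1\}$-valued, so this is automatic and one takes $c=1$. It is also worth remarking that the computation is symmetric, so the same argument gives the converse, i.e. negative dependence of a pair is equivalent to negative dependence of its reflection $c-X,\,c-Y$ through any point $c$.
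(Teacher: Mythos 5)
Your proposal is correct and follows essentially the same route as the paper: expand $(c-X)(c-Y)$ and $\mathbb{E}[c-X]\mathbb{E}[c-Y]$, cancel the $c$-dependent terms, and read off $\mathbb{E}[XY]\le\mathbb{E}[X]\mathbb{E}[Y]$. Your added remarks on integrability and the symmetry (equivalence) of the statement are accurate but not part of the paper's argument.
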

\begin{proof} Since $c-X, c-Y$ are negatively dependent,  
\begin{align*}
\mathbb{E}[(c-X)(c-Y)] \leq \mathbb{E}[c-X]\mathbb{E}[c-Y] .
\end{align*}
Expanding the left hand side and the right hand side gives,
\begin{align*}
\mathbb{E}[c-X]\mathbb{E}[c-Y]&=(c - \mathbb{E}[X])(c - \mathbb{E}[Y])\\
							 &= c^{2} - c\mathbb{E}[X] - c\mathbb{E}[Y] + \mathbb{E}[X]\mathbb{E}[Y]\\
\mathbb{E}[(c - X)(c - Y)] &= \mathbb{E}[c^{2} - cX -cY + XY]\\
						  &= c^{2} -c\mathbb{E}[X] -c\mathbb{E}[Y] + \mathbb{E}[XY].
\end{align*}
This implies 
\begin{align*}
\mathbb{E}[XY] \leq \mathbb{E}[X]\mathbb{E}[Y].
\end{align*}
Hence, $X, Y$ are negatively dependent.
\end{proof}

The following lemma bounds the variance of the sum of mutually negatively dependent random variables (r.v.s).

\begin{lemma}\label{L: Var bound}
Let $X_{1},\ldots,X_{n}$ be $n$ mutually negatively dependent r.v.s. Then,
\begin{align*}
Var\left (\sum_{i=1}^{n}X_{i}\right ) \leq \sum_{i=1}^{n}Var(X_{i}).
\end{align*}
\end{lemma}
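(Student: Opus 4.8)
The plan is to prove the variance bound for mutually negatively dependent random variables by expanding the variance of the sum and controlling the cross terms via the pairwise negative dependence. First I would write
\[
\var\left(\sum_{i=1}^n X_i\right) = \sum_{i=1}^n \var(X_i) + \sum_{i \neq j}\bigl(\mathbb{E}[X_iX_j] - \mathbb{E}[X_i]\mathbb{E}[X_j]\bigr),
\]
which is just the standard expansion of the variance of a sum. The key observation is that each summand in the second (cross) term is the covariance $\textnormal{Cov}(X_i,X_j)$, and the definition of negative dependence recalled just before the lemma states precisely that $\mathbb{E}[X_iX_j] \le \mathbb{E}[X_i]\mathbb{E}[X_j]$ for $i\neq j$, i.e. $\textnormal{Cov}(X_i,X_j)\le 0$. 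Hence every term in the cross sum is nonpositive, and dropping it yields the claimed inequality.

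The only subtlety worth a sentence is the meaning of ``mutually negatively dependent'': for the expansion above one only needs that \emph{every pair} $X_i,X_j$ with $i\neq j$ satisfies the pairwise condition $\mathbb{E}[X_iX_j]\le\mathbb{E}[X_i]\mathbb{E}[X_j]$ — one does not need any joint (higher-order) negative association. So I would state at the outset that ``mutually negatively dependent'' is used here to mean pairwise negatively dependent, matching the two-variable definition given just above Lemma \ref{L: Negative dependence condition} and the output of Lemma \ref{L: appearance prob.} (where the $1-I_{n,i}$, and hence by Lemma \ref{L: Negative dependence condition} the $I_{n,i}$, are pairwise negatively dependent). This also makes the lemma immediately applicable to $\sum_i I_{n,i} = |H_n(\cdot)|$-type sums that appear in the proof of Proposition \ref{P:2 hops partial collection}.

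There is essentially no obstacle here: the proof is a one-line expansion followed by term-by-term sign control. If anything, the ``hard'' part is purely expository — making sure the finiteness hypotheses (the $X_i$ have finite second moments, which holds trivially for the bounded $\{0,1\}$-valued indicators $I_{n,i}$ to which it is applied) are in force so that all the expectations and covariances are well defined, and noting that linearity of expectation gives $\mathbb{E}[(\sum_i X_i)^2] = \sum_{i,j}\mathbb{E}[X_iX_j]$ and $(\mathbb{E}\sum_i X_i)^2 = \sum_{i,j}\mathbb{E}[X_i]\mathbb{E}[X_j]$, whose difference is exactly the diagonal (variances) plus the off-diagonal (covariances) decomposition above.
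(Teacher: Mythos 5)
Your proof is correct and takes essentially the same route as the paper: both expand the variance of the sum into diagonal (variance) terms plus off-diagonal (covariance) terms and bound each off-diagonal term above by $0$ using the pairwise condition $\mathbb{E}[X_iX_j]\le\mathbb{E}[X_i]\mathbb{E}[X_j]$. The only cosmetic difference is that the paper computes $\mathbb{E}\bigl[(\sum_i X_i)^2\bigr]$ and $\bigl(\mathbb{E}[\sum_i X_i]\bigr)^2$ separately before subtracting, whereas you write the covariance decomposition in one step.
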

\begin{proof}First of all,
\begin{align*}
\mathbb{E}\left [\left (\sum_{i=1}^{n}X_{i}\right )^{2}\right]&= \mathbb{E}\left [\sum_{i=1}^{n}X^{2}_{i}+2\sum_{1\leq i< j\leq n}X_{i}X_{j}\right]\\
&=\sum_{i=1}^{n}\mathbb{E}\left [X^{2}_{i}\right ]+2\sum_{1\leq i< j\leq n}\mathbb{E}\left [X_{i}X_{j}\right]\\
&\leq \sum_{i=1}^{n}\mathbb{E}\left [X^{2}_{i}\right ]+2\sum_{1\leq i< j\leq n}\mathbb{E}\left [X_{i}\right ]\mathbb{E}\left [X_{j}\right].
\end{align*}
Moreover,
\begin{align*}
\left (\mathbb{E}\left [\left (\sum_{i=1}^{n}X_{i}\right )\right]\right )^{2}&= \sum_{i=1}^{n}\left (\mathbb{E}\left [X_{i}\right ]\right )^{2}+2\sum_{1\leq i< j\leq n}\mathbb{E}\left [X_{i}\right ]\mathbb{E}\left [X_{j}\right].\\
\end{align*}
This leads directly to the result.
\end{proof}

\emph{Now we can process to the main proof of Proposition \ref{P:2 hops partial collection}.}

\begin{proof}
We start by computing $\mathbb{E}[N_{n}]$. By Lemma \ref{L: appearance prob.},
\begin{align*}
&\mathbb{E}[N_{n}]=\sum_{i=1}^{k-1}\mathbb{E}[I_{n,i}]\\
			 &=(k-1)-\sum_{i=1}^{k-1}\mathbb{E}\left [\prod_{j=i}^{k-1}\left (1-\frac{1-\epsilon}{j}\right )^{t_{n-1,j+1}-t_{n-1,j}}\right ].
\end{align*}
By Lemma \ref{L: k hops arrival time}, $t_{n-1,j+1}-t_{n-1,j},\, j=1,\ldots, k-1$ are bounded below a.s. by $k-1$ independent geometric r.v.s $\tau_{j},\, j=1,\ldots, k-1$ with parameters $p_j=(1-\epsilon)(kN-j)/kN,\, j=1,\ldots,k-1$ respectively. Hence,
\begin{align*}
&\mathbb{E}\left [\prod_{j=i}^{k-1}\left (1-\frac{1-\epsilon}{j}\right )^{t_{n-1,j+1}-t_{n-1,j}}\right ]\\
&=\prod_{j=i}^{k-1}\mathbb{E}\left [\left (1-\frac{1-\epsilon}{j}\right )^{t_{n-1,j+1}-t_{n-1,j}}\right ]\\
&\leq \prod_{j=i}^{k-1}\mathbb{E}\left [\left (1-\frac{1-\epsilon}{j}\right )^{\tau_{j}}\right ]
\end{align*}
Now for any positive constant $c\le 1$ we have $\mathbb{E}(c^{\tau_j}) =\sum_{k=1}^{\infty}c^{k}(1-p_j)^{k-1}p_j =\frac{c p_j}{1-c+c p_j}.$
Setting $c=1-\frac{1-\epsilon}{j}$ we have $c p_j < 1-\epsilon$ so we obtain
$\frac{c p_j}{1-c+c p_j} \le \frac{1}{\frac{1}{j}+1} =\frac{j}{j+1}.$
Continuing, we now have
\begin{align*}
\mathbb{E}[N_{n}]
&\geq k-1-\left (\sum_{i=1}^{k-1}\prod_{j=i}^{k-1}\frac{j}{j+1}\right ) = k-1-\left (\sum_{i=1}^{k-1}\frac{i}{k}\right )\\
& =k-1 - \dfrac{(k-1)k}{2k} =\frac{k-1}{2}.
\end{align*}

Next, we bound $\var(N_{n})$. Using the fact that $I_{n,i},\, i=1,\ldots,k-1$ have negative pair-wise dependence (Lemma \ref{L: appearance prob.}) conditioned on $t_{i},\, i=1, \ldots, k-1$, we obtain 
\begin{align*}
\var(N_{n})&=\mathbb{E}[\var(N_{n}|t_{n-1,1},\ldots,t_{n-1,k-1})]\\
	  &\leq \mathbb{E}\left [\sum_{i=1}^{k}\var(I_{n,i}|t_{n-1,1},\ldots,t_{n-1,k-1})\right ]\\
	  &=\sum_{i=1}^{k}\var(I_{n,i}).
\end{align*}
Since $I_{n,i}$ are $\{0,1\}$ r.v.s, $\var(I_{n,i})\leq 1$, $\var(N_{n})\leq k$. Then, by Chebyshev's inequality, we have
\begin{align*}
&\mathbb{P}\left (\frac{1}{2}-\frac{N_{n}}{(k-1)} >\delta\right )=\mathbb{P}\left (\frac{(k-1)}{2}-N_{n} >\delta (k-1)\right )\\
&\leq \mathbb{P}\left (\mathbb{E}[N_{n}]-N_{n} >\delta (k-1)\right )
\\&\leq \mathbb{P}\left (|\mathbb{E}[N_{n}]-N_{n} |>\delta (k-1)\right )\\
&\leq \frac{\var(N_{n})}{\delta^{2}(k-1)^{2}} \leq \frac{k}{\delta^{2}(k-1)^{2}}.
\end{align*}
This completes the proof.
\end{proof}

\emph{Proof of Proposition \ref{P: n hops total collection}}
\begin{proof}
We use induction on $n$. Note that the $n^{th}$ hop is the $n+1^{th}$ node. The base case $n=1$ is covered by Subsection \ref{SS: One hop}. Suppose that the result holds for $n-1$. We can write $T_{n}=T_{n-1}+\Delta $, where $T_{n-1}$ is the time it takes for node $n$ to collect $k$ coded packets. By the induction hypothesis, 
\[\mathbb{P}\left (\frac{(1-\epsilon)T_{n-1}}{k} \leq N\log\frac{N}{N-1}+(n-2)N\log \frac{2N-1}{2N-2}\right )\] converges to $1$ as $k$ goes to $\infty$. Moreover, by Proposition \ref{P:2 hops partial collection}, $\mathbb{P}(N_{n}/(k-1)\ge0.5)$ converges to 1. So, by an argument similar to that in Subsection \ref{SS: One hop}, $(1-\epsilon)\Delta/k$ converges to 
\begin{align*}
&\frac{\sum_{i=N_{n}+1}^{k}\frac{kN}{kN-i}}{k}=N\int_{N_{n}/kN}^{\frac{1}{N}}\frac{1}{1-x}dx + o(k^{-1}) \\
&=N\left (-\log\left (1- \frac{1}{N}\right ) + \log\left (1-\frac{N_{n}}{kN}\right )\right )+ o(k^{-1})\\
\end{align*} 
in probability a.s. (conditioned on $N_{n}$). In the above equality, by substituting $N_{n}$ by $k/2$, we get
\begin{align*}
&N\left (-\log\left (1- \frac{1}{N}\right ) + \log\left (1-\frac{1}{2N}\right )\right )\\
&= N\left (-\log\left (\frac{N - 1}{N}\right ) + \log\left (\frac{2N - 1}{2N}\right )\right )\\
&= N \log\left (\frac{2N - 1}{2N}\frac{N}{N - 1} \right ) = \log\left (\frac{2N-1}{2N-2}\right ).
\end{align*}  Hence,
\begin{align*}
\lim_{k\rightarrow \infty}\mathbb{P}\left ((1-\epsilon)\frac{\Delta}{k} \leq N \log \frac{2N-1}{2N-2}\right ) &= \lim_{k\rightarrow \infty}\mathbb{P}\left (\frac{N_{n}}{k} \geq \frac{1}{2}\right ) \\
&= 1.
\end{align*}
The proof follows from this.
\end{proof}

\subsection{Proof of Theorem \ref{T:fluid bound}}
An ancillary result that we will use repeatedly without further explanation is the following.
Consider the differential equation
\[
\frac{d}{dt} x(t) =  -a(t) x(t) + b(t)
\]
with $x(0)= x_0 \ge 0,$ $a(t),b(t)$ Lipschitz continuous, $a(t)>0,b(t) \ge 0$ and $a(t)$ bounded above. Then $x(t) \ge 0.$ 
This follows from standard existence and uniqueness results on differential equations \cite{DIFFEQREF}
and on the solution using Duhamel’s principle:
\[
x(t) = x_0 e^{-z(t)} + \int_0^{z(t)} e^{-z(t)+s} \frac{b(z^{-1}(s))}{a(z^{-1}(s))} d s
\]
where $z(t) = \int_0^t a(u) du.$

\begin{lemma}[Monotonicity]\label{lem:monotonicity}
If $h_i(0) \ge \tilde{h}_i(0)$ for all $i$ then $h_i(t) \ge \tilde{h}_i(t)$
for all $i$ and $t.$
\end{lemma}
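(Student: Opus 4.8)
The plan is to prove this by induction on the hop index $i$, exploiting the fact that the dynamics of $h_i$ depend only on $h_{i-1}$ (and on $h_i$ itself), so the system is "triangular" and can be analyzed one coordinate at a time. The base case is $i=0$: the source always holds the full file, so $h_0(t) = M$ for all $t$ in both systems (or, if one prefers to allow a general source trajectory, $h_0 \equiv \tilde h_0$ by hypothesis since there is no $i=-1$ to drive it), and the claim is immediate. For the inductive step, assume $h_{i-1}(t) \ge \tilde h_{i-1}(t)$ for all $t$; I want to conclude $h_i(t) \ge \tilde h_i(t)$ for all $t$.

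The key step is a comparison argument for the scalar ODE governing $h_i$. Away from the decoding times, $h_i$ satisfies $\frac{d}{dt} h_i(t) = 1 - \frac{h_i(t)}{h_{i-1}(t)}$, which is exactly of the form $\frac{d}{dt} x = -a(t) x + b(t)$ with $a(t) = 1/h_{i-1}(t) > 0$ bounded above (since $h_{i-1} \ge \tilde h_{i-1}(0)$ is bounded below away from $0$ once $t>0$, or one treats a short initial interval separately) and $b(t) = 1 \ge 0$; similarly at the decoding instants the drift is $1 - M/h_{i-1}(t)$, again affine in $h_i$ with the same sign structure. Writing $\delta_i(t) = h_i(t) - \tilde h_i(t)$, I subtract the two ODEs and use that $1/h_{i-1}(t) \le 1/\tilde h_{i-1}(t)$ (by the inductive hypothesis $h_{i-1} \ge \tilde h_{i-1}$, and both are positive) to bound $\frac{d}{dt}\delta_i(t)$ from below by $-a(t)\delta_i(t)$ plus a nonnegative term coming from the gap in the coefficients acting on the (nonnegative) quantity $\tilde h_i$ or $M$. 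Then the Duhamel/integrating-factor representation quoted in the ancillary result — applied to the difference — shows $\delta_i(t) \ge \delta_i(0) e^{-z(t)} \ge 0$ since $\delta_i(0) \ge 0$ by hypothesis.

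The main obstacle, and the step requiring the most care, is handling the decoding times $\mathcal{T}_{i-1}$ and $\mathcal{T}_i$ (and $\tilde{\mathcal T}_{i-1}, \tilde{\mathcal T}_i$), where the vector field is only piecewise continuous and, more importantly, where $h_i$ jumps from $1$ to $M$. One must check (i) that $h_{i-1} \ge \tilde h_{i-1}$ forces $\mathcal{T}_{i-1} \le \tilde{\mathcal{T}}_{i-1}$, so that node $i$ in the larger system gets the "secondary source" boost no later than in the smaller system — this keeps the comparison of the drift terms $1 - M/h_{i-1}$ versus $1 - M/\tilde h_{i-1}$ valid over the relevant interval; and (ii) that the jump of $h_i$ to $M$ at $\mathcal{T}_i$ happens no later than the corresponding jump of $\tilde h_i$, which follows once we have established $h_i \ge \tilde h_i$ on $[0, \mathcal{T}_i]$ (so $\tilde h_i$ reaches $1$ no earlier), and after the jump both equal $M$ so the inequality trivially persists. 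I would organize this as: run the affine comparison on each maximal interval on which neither trajectory has a discontinuity, propagate the inequality $\delta_i \ge 0$ across each decoding instant using the jump ordering just described, and patch the pieces together by continuity of $\delta_i$ from the left on the closure of each interval. The only genuinely delicate bookkeeping is the ordering of the four decoding times and confirming the sign of the extra term in $\frac{d}{dt}\delta_i$ remains nonnegative across the at most finitely many switching points in any bounded time horizon.
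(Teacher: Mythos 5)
Your proposal follows the paper's proof essentially step for step: induction on the hop index, rewriting the ODE for the difference $h_{j+1}(t)-\tilde h_{j+1}(t)$ as an affine equation $-a(t)x+b(t)$ with nonnegative forcing term $b$ coming from the coefficient gap $\frac{1}{\tilde h_j}-\frac{1}{h_j}\ge 0$, invoking the Duhamel-type ancillary result, and propagating the inequality piecewise across the ordered decoding times $T_j\le\tilde T_j$ and $T_{j+1}\le\tilde T_{j+1}$. The approach and all key steps coincide with the paper's argument.
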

\begin{IEEEproof}
Assume $h_i(0) \ge \tilde{h}_i(0).$
It follows that $T_1 \le \tilde{T}_1$ and that $h_1(t) \ge \tilde{h}_1(t)$ for all $t.$
We proceed by induction.
Hence, assume that $h_j(t) \ge \tilde{h}_j(t)$ for all $t$ for some $j \ge 1.$
This implies that $T_{j} \le \tilde{T}_{j}.$
For $t <  T_{j}$ we have
by \eqref{eqn:hdiffeq} that
\begin{align*}
\frac{d}{dt} (h_{j+1}(t) - \tilde{h}_{j+1}(t)) =& -\frac{1}{\tilde{h}_{j}(t)}(h_{j+1}(t) - \tilde{h}_{j+1}(t))
\\&+\bigl(\frac{1}{\tilde{h}_{j}(t)} - \frac{1}{{h}_{j}(t)}\bigr) h_{j+1}(t).
\end{align*}
Hence we see that  $h_{j+1}(t) \ge \tilde{h}_{j+1}(t)$ for all $t < T_{j}.$
For $t \in (T_{j},\tilde{T}_{j})$ the above equation still holds with 
${h}_{j}(t) = N,$ hence we still have $h_{j+1}(t) \ge \tilde{h}_{j+1}(t).$
For $t > \tilde{T}_{j},$ $t< \min\{T_{j+1},\tilde{T}_{j+1}\}$ we have
\begin{align*}
\frac{d}{dt} (h_{j+1}(t) - \tilde{h}_{j+1}(t)) = -\frac{1}{N}(h_{j+1}(t) - \tilde{h}_{j+1}(t))
\end{align*}
and we again have $h_{j+1}(t) \ge \tilde{h}_{j+1}(t).$  Hence 
$T_{j+1} \le \tilde{T}_{j+1}$ and  $h_{j+1}(t) \ge \tilde{h}_{j+1}(t)$ for all $t.$
\end{IEEEproof}

Given initial conditions the system \eqref{eqn:hdiffeq}
can be solved as follows.
Define ${Q_0(t)}=e^{\frac{t}{N}},$ 
and for $i > 1$ define 
\[
{Q_i(t)} = \int_0^t {Q_{i-1}(z)} \,dz + h_i(0) {Q_{i-1}(0)}.
\]
(Note that in general we have $Q_i(0) = \prod_{j=1}^i h_j(0).$)
The solution to \eqref{eqn:hdiffeq} for $t \le T_1$ is then given by
\[
h_i(t) = \frac{{Q_i(t)}}{{Q_{i-1}(t)}}
\]
We can verify this directly:  First note that the initial conditions are satisfied and that the solution is correct
for $i=1.$ The key point is that $\frac{d}{dt} Q_i(t) = Q_{i-1}(t),$ so for $i>1$ we obtain
\begin{align*}
\frac{d}{dt}h_i(t) &= \frac{{Q_{i-1}(t)}}{{Q_{i-1}(t)}} - \frac{{Q_{i}(t)}Q_{i-2}(t)}{{Q^2_{i-1}(t)}}
\\&= 1 - \frac{h_i(t)}{h_{i-1}(t)}\,.
\end{align*}

Let us consider the system initialized with $h_i(0) = 0$ for $i \ge 1.$
For $t < T_1$ we then have
\begin{align*}
Q_i(t) &  = N^i\sum_{k=i}^\infty \frac{(t/N)^k}{k!} 
\end{align*}
which yields
\[
h_i (t) = \frac{t}{i}\, (1+O(t/N))\,.
\]

We now have two ways of analyzing the system, one using the differential equations
and another using the algebraic approach based on the above.
The differential equations are useful for establishing monotonicity properties of the
solution and the algebraic approach is useful for characterizing limiting behavior.

We will first consider the analysis of the differential equations
and focus primarily on the interval $[0,T_1].$
We assume initial conditions $h_i(0)$ that are non-increasing in $i$ and
satisfy $h_1(0)<1.$  Let us denote $T_1$ simply as $T.$
Define for $i \ge 1,$
\begin{align*}
r_i(t) = \frac{h_i(t)}{h_{i-1}(t)}; \,\, 
R_i(t) = \frac{1}{2-r_{i}(t)}; \,\, 
\alpha_i(t) = \frac{2-r_{i}(t)}{h_{i}(t)};
\end{align*}
then we have for $i>1,$
\begin{equation}\label{eqn:rdiffeq}
 \frac{d}{dt} r_i(t) = \alpha_{i-1}(t)(R_{i-1}(t)-r_i(t))
\end{equation}
Thus, we observe that $r_i(t)$ tracks $R_{i-1}(t).$

Let us call the sequence $h_i$ {\em regularly ordered} if
$h_i$ is strictly positive, monotonically decreasing with $h_1 \le 1,$  
$r_i \ge R_{i-1}$ for all $i \ge 2.$ 

\begin{lemma}\label{lem:decreasingr}
Assume a regularly ordered initial condition $h_i(0).$
Assume further that intializing with $h_i(0)$ gives
 $r_2(T) \ge \frac{1}{2-\frac{1}{N}}.$  
Then the solution has $r_i(t)$ non-increasing 
and $r_i(t) \ge R_{i-1}(t)$ for all $i \ge 2$ and $t \in [0,T].$
\end{lemma}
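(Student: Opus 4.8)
The plan is to prove Lemma~\ref{lem:decreasingr} by induction on the hop index $i$, exploiting the ODE~\eqref{eqn:rdiffeq} which says that $r_i$ is driven toward the target $R_{i-1}=1/(2-r_{i-1})$. The base of the induction is $i=2$: since the initial condition is regularly ordered, $r_2(0) \ge R_1(0)$, and $R_1(t) = 1/(2-r_1(t))$ where $r_1(t)=h_1(t)/h_0(t)$. Here $h_0(t)$ is the source, which is constantly $N$ (or the relevant fixed value), so $r_1(t) = h_1(t)/N$ is increasing on $[0,T]$ (because $h_1$ is increasing toward $1$), hence $R_1(t)$ is increasing. I would then argue that $r_2(t) \ge R_1(t)$ is preserved: if at some first time $r_2$ were to touch $R_1$ from above, then at that instant $\frac{d}{dt} r_2 = \alpha_1 (R_1 - r_2) = 0$ while $\frac{d}{dt} R_1 > 0$, so $r_2$ cannot actually fall below $R_1$ — a standard comparison/barrier argument. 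Given $r_2(t) \ge R_1(t)$ and the supposed endpoint bound $r_2(T) \ge 1/(2-\tfrac1N)$, monotonicity of $r_2$ needs a separate sign check: from~\eqref{eqn:rdiffeq}, $\frac{d}{dt} r_2 = \alpha_1(R_1 - r_2) \le 0$ exactly when $r_2 \ge R_1$, which we have just established, so $r_2$ is non-increasing on $[0,T]$.

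For the inductive step, suppose $r_{i-1}(t)$ is non-increasing and $r_{i-1}(t) \ge R_{i-2}(t)$ on $[0,T]$. Because $R_{i-1}(t) = 1/(2-r_{i-1}(t))$ is an increasing function of $r_{i-1}$, the fact that $r_{i-1}$ is non-increasing makes $R_{i-1}(t)$ non-increasing. Now I want to run the same barrier argument for $r_i$ versus $R_{i-1}$, but the roles are reversed compared to the base case: here the target $R_{i-1}$ is \emph{decreasing}, not increasing. I would instead show directly that $r_i(t) \ge R_{i-1}(t)$ for all $t$ by a comparison argument that uses the regularly ordered initial condition ($r_i(0) \ge R_{i-1}(0)$) together with the monotonicity of $R_{i-1}$: if $r_i$ ever drops to $R_{i-1}$, then at that point $\frac{d}{dt}(r_i - R_{i-1}) = 0 - \frac{d}{dt}R_{i-1} \ge 0$ since $R_{i-1}$ is non-increasing, so again $r_i$ cannot cross below. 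Once $r_i \ge R_{i-1}$ on $[0,T]$ is in hand, \eqref{eqn:rdiffeq} gives $\frac{d}{dt} r_i = \alpha_{i-1}(R_{i-1}-r_i) \le 0$ (note $\alpha_{i-1} = (2-r_{i-1})/h_{i-1} > 0$ since $r_{i-1} < 1$ on $[0,T]$ as $h_{i-1}<h_{i-2}\le\cdots$), so $r_i$ is non-increasing, completing the induction.

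A subtlety I would need to handle carefully is the endpoint hypothesis $r_2(T) \ge 1/(2-\tfrac1N)$ and whether an analogous lower bound is needed at each level to keep everything well-defined — in particular to guarantee $2 - r_i(t) > 0$ so that $R_i$ and $\alpha_i$ do not blow up. Since each $r_i$ is non-increasing and we are looking at $t \le T = T_1$ (before any decode event on hops $\ge 1$), I expect $r_i(t) \le r_i(0) \le h_1(0) \le 1 < 2$, so the denominators stay bounded away from zero; I would state this explicitly. The main obstacle is getting the direction of the barrier argument right at each stage: the target $R_{i-1}$ switches from increasing (at $i=2$, because $h_0$ is a constant source) to decreasing (for $i \ge 3$, once $r_{i-1}$ is shown non-increasing), and the invariant $r_i \ge R_{i-1}$ has to be propagated through this change. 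The endpoint assumption on $r_2(T)$ is what ties the abstract invariant to the concrete quantity $R_1(T) \ge 1/(2-\tfrac1N)$ used later (presumably in the proof that $\mathcal{T}_n/n \to \Tfixed(M)$), so I would make sure the statement I prove is phrased to feed cleanly into that downstream use.
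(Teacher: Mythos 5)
Your inductive step is fine and matches the paper's, but the base case $i=2$ contains a genuine error of direction. You claim that if $r_2$ touches $R_1$ from above at some first time $t_0$, then since $\frac{d}{dt}r_2(t_0)=0$ while $\frac{d}{dt}R_1(t_0)>0$, ``$r_2$ cannot actually fall below $R_1$.'' That conclusion is backwards: at $t_0$ the difference $z=r_2-R_1$ has $\frac{d}{dt}z(t_0)=-\frac{d}{dt}R_1(t_0)<0$, i.e.\ $z$ is \emph{decreasing through zero}, so $r_2$ does cross below. When the target $R_1$ is increasing, a forward barrier argument keeps $r_2$ \emph{below} $R_1$ once it hits, not above. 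The forward-invariance you want holds only when the target is non-increasing, which is exactly why your inductive step works for $i\ge 3$ (once $r_{i-1}$ is non-increasing, $R_{i-1}$ is non-increasing and the sign flips your way) but fails at $i=2$.

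This is precisely why the lemma's hypothesis $r_2(T)\ge\frac{1}{2-\frac{1}{N}}=R_1(T)$ is essential, and it is a mistake to set it aside as merely ``for downstream use.'' The paper's base case is a backward argument: writing $\frac{d}{dt}z=-\alpha_1 z-\frac{d}{dt}R_1$ and using that $R_1$ is (strictly) increasing for $N<\infty$, one sees via the integrating-factor/Gr\"onwall solution that if $z(t_0)\le 0$ for some $t_0$, then $z(t)<0$ for all $t\in(t_0,T]$. The terminal datum $z(T)\ge 0$ therefore rules out any such $t_0$, giving $r_2(t)\ge R_1(t)$ on all of $[0,T]$; monotonicity of $r_2$ then follows from \eqref{eqn:rdiffeq} as you say. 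So the fix is to replace your forward barrier argument at $i=2$ with this backward comparison anchored at $t=T$, and to explicitly invoke $r_2(T)\ge R_1(T)$ there. Your remaining concerns (keeping $2-r_i>0$ so $R_i,\alpha_i$ are well-defined) are sound and handled as you sketch.
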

\begin{IEEEproof}
We first remark that each $\alpha_i$ is
finite and lies in a finite interval bounded away from $0.$
The proof is by induction on $i.$
First note that $R_1(t) = \frac{1}{2-\frac{h_1(t)}{N}}$ is non-decreasing and,
if $N< \infty$ it is increasing.
It therefore follows from \eqref{eqn:rdiffeq} that the assumed condition
$r_2(T) \ge \frac{1}{2-\frac{1}{N}} = R_1(T)$
implies that $r_2(t) \ge R_1(t)$ 
and that $r_2(t)$ is non-increasing for all $t \in [0,T].$
We proceed by induction.
Assume for some $i \ge 2$ that 
$r_{i}(t)$ is non-increasing and $r_{i}(t) \ge R_{i-1}(t)$ for $t\in [0,T].$ 
Then $R_i(t) = \frac{1}{2-r_i(t)}$ is non-increasing. 
Since $R_i(0) \le r_{i+1}(0)$ we conclude from \eqref{eqn:rdiffeq}
that $r_{i+1}(t)$ is non-increasing and $r_{i+1}(t) \ge R_{i}(t)$ for $t\in [0,T].$ 
\end{IEEEproof}
Note that the above lemma implies that if an initial condition $h_i(0)$ is regularly ordered
and satisfies $r_2(T) \ge R_1(T),$
then $h_i(t)$ is regularly ordered for all $t\in[0,T].$

\begin{lemma}\label{lem:monosys}
Assume two regularly ordered initial conditions $h'_i(0)$ and $h_i(0)$
where $h_1'(0) \ge h_1(0),$ and $r'_i(0) \ge r_i(0)$ for all $i \ge 2.$
Assume  further that $r_2(T) \ge \frac{1}{2-\frac{1}{N}}.$  
Then 
$h_i'(T') \ge h_i(T)$ and $r'_i(T') \ge r_i(T)$ for all $i \ge 1.$
\end{lemma}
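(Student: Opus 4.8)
The plan is to run both systems on the common ``clock'' $s = h_1$. Since $h_1$ solves the autonomous equation $\frac{d}{dt}h_1 = 1 - \frac{h_1}{N}$, it is strictly increasing on $[0,T]$ (resp.\ $[0,T']$) and attains the value $1$ exactly at the endpoint; moreover $T' \le T$ because $h_1'(0) \ge h_1(0)$. Writing $\rho_i(s),\eta_i(s)$ for the values of $r_i,h_i$ at the instant $h_1 = s$ converts the comparison at the two \emph{different} times $T$ and $T'$ into a comparison at the single parameter value $s = 1$.

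First I would reduce the inequality for the $h_i$ to one for the $r_i$. Because $h_1'(T') = h_1(T) = 1$ and $h_i = h_1\prod_{j=2}^{i} r_j$, we have $h_i'(T') = \prod_{j=2}^{i} r_j'(T')$ and $h_i(T) = \prod_{j=2}^{i} r_j(T)$, so it suffices to show $r_j'(T') \ge r_j(T)$ for every $j \ge 2$; the case $i=1$ is the trivial equality $h_1'(T')=h_1(T)=1$, $r_1'(T')=r_1(T)=\frac{1}{N}$.

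Next, from \eqref{eqn:rdiffeq} and $\frac{ds}{dt} = \frac{N-s}{N}$, using $\eta_1(s)=s$, $\rho_1(s)=\frac{s}{N}$ and $\eta_i(s)=s\prod_{j=2}^{i}\rho_j(s)$, one gets for $i \ge 2$
\[
\frac{d}{ds}\rho_i(s) = \frac{N}{(N-s)\,\eta_{i-1}(s)}\bigl(1 - (2-\rho_{i-1}(s))\,\rho_i(s)\bigr),
\]
which is well posed on the relevant $s$-interval since $N-s \ge N-1 > 0$ and $\eta_{i-1}(s)$ stays in a compact subinterval of $(0,1]$ (the $h_i(0)$ are strictly positive and the $h_i$ remain monotone in $i$, as for a regularly ordered configuration). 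The heart of the proof is an induction on $i$ establishing $\rho_i'(s) \ge \rho_i(s)$ on the common parameter interval $[h_1'(0),1]$. For $i=1$ both sides equal $\frac{s}{N}$. At the left endpoint $\rho_i'(h_1'(0)) = r_i'(0)$, while $\rho_i(h_1'(0))$ is the value of $r_i$ in the unprimed system at the (nonnegative) time its $h_1$ first reaches $h_1'(0)$; since $r_i(\cdot)$ is non-increasing by Lemma \ref{lem:decreasingr}, this value is $\le r_i(0) \le r_i'(0)$. For the inductive step put $y = \rho_i' - \rho_i$; subtracting the two ODEs gives $\frac{d}{ds}y = C_i(s) + B_i'(s)\,y$ with $B_i'(s) = -\frac{N(2-\rho_{i-1}'(s))}{(N-s)\eta_{i-1}'(s)}$ negative and bounded, and
\[
C_i(s) = \frac{N}{N-s}\Bigl(\frac{1-(2-\rho_{i-1}'(s))\rho_i(s)}{\eta_{i-1}'(s)} - \frac{1-(2-\rho_{i-1}(s))\rho_i(s)}{\eta_{i-1}(s)}\Bigr).
\]
By the inductive hypothesis $\rho_{i-1}' \ge \rho_{i-1}$, whence $\eta_{i-1}' \ge \eta_{i-1} > 0$ and the first numerator dominates the second; and the second numerator is $\le 0$ because $\rho_i = r_i \ge R_{i-1} = \frac{1}{2-r_{i-1}}$ by Lemma \ref{lem:decreasingr}. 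Distinguishing whether the first numerator is $\ge 0$ or $< 0$ then yields $C_i(s) \ge 0$, so the ancillary Duhamel estimate stated at the start of this subsection, applied with $x_0 = y(h_1'(0)) \ge 0$, gives $y(s) \ge 0$ on $[h_1'(0),1]$. Taking $s = 1$ gives $r_i'(T') \ge r_i(T)$ for all $i$, and the second paragraph then gives $h_i'(T') \ge h_i(T)$.

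I expect the main obstacle to be the sign of $C_i(s)$: its two error contributions --- one from $\eta_{i-1}' \ge \eta_{i-1}$, the other from $\rho_{i-1}' \ge \rho_{i-1}$ --- pull in opposite directions, and the cancellation that makes $C_i \ge 0$ works precisely because the unprimed numerator $1-(2-\rho_{i-1})\rho_i$ is non-positive, i.e.\ because $r_i \ge R_{i-1}$; this is exactly the regularity furnished by Lemma \ref{lem:decreasingr}, and it is why the hypothesis $r_2(T) \ge \frac{1}{2-1/N}$ (which also transfers to the primed system via the $i=2$ case of the induction) is imposed. The remaining work --- checking well-posedness of the reparametrized ODEs and that the $h_i$ stay strictly positive and monotone in $i$ on $[0,T]$ --- is routine.
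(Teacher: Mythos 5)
Your proof is correct and follows essentially the same route as the paper's: an induction on $i$ comparing the two solutions of \eqref{eqn:rdiffeq}, splitting the difference of the right-hand sides into a damping term plus a forcing term whose non-negativity rests on $r_i \ge R_{i-1}$ and the monotonicity of $r_i$ supplied by Lemma \ref{lem:decreasingr}, and then invoking the Duhamel comparison principle stated at the start of the appendix. The only difference is cosmetic: you reparametrize both systems by $s=h_1$ so that $T$ and $T'$ both correspond to $s=1$ and absorb the time mismatch at the left endpoint, whereas the paper compares at equal times on $[0,T']$ and uses the same non-increasing property of $r_i$ on $[T',T]$ to bridge the gap at the right endpoint.
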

\begin{IEEEproof}
By Lemma \ref{lem:decreasingr}, for $i \ge 2$ 
$r_i(t)$ is non-increasing on $[0,T]$ and
$r_i(t) \ge R_{i-1}(t).$ 

Note that $T' \le T$ and consider first $t\in [0,T'].$ 
We have $r'_1(t) \ge r_1(t)$ and $h'_1(t) \ge h_1(t)$ since $h'_1(0) \ge h_1(0).$
We proceed by induction.
Assume for some $i\ge 1$ that
$r'_i(t) \ge r_i(t)$  and $h'_i(t) \ge h_i(t).$ Then
$R'_i(t) \ge R_i(t)$ and $\alpha'_i(t) \le \alpha_i(t).$
From \eqref{eqn:rdiffeq} we have 
\begin{align*}
& \frac{d}{dt}(r'_{i+1}(t)-r_{i+1}(t)) 
\\&
= \alpha'_i(t)(R'_i(t)-r'_{i+1}(t)) -\alpha_i(t)(R_i(t)-r_{i+1}(t))
\\&
= -\alpha'_i(t)(r'_{i+1}(t)-r_{i+1}(t))
+ \alpha'_i(t)(R'_i(t)-R_i(t)) 
\\&\quad
+(\alpha_i(t)-\alpha'_i(t))(r_{i+1}(t)-R_i(t))
\,.
\end{align*}
By Lemma \ref{lem:decreasingr} the last term is non-negative and the second 
term is non-negative by the above argument.
Since $r'_{i+1}(0)-r_{i+1}(0)\ge 0$ the above equation clearly implies
 $r'_{i+1}(t)-r_{i+1}(t)\ge 0.$
Since $h'_i(t) \ge h_i(t)$ this implies $h'_{i+1}(t) \ge h_{i+1}(t).$
By induction we now have 
$r'_{i}(t) \ge r_{i}(t)$ on $[0,T']$ for all $i \ge 1.$

By Lemma \ref{lem:decreasingr}, $r_i(t)$ is decreasing on $[T',T]$ and we conclude that $r_i(T) \le r_i(T') \le r'_i(T').$
Since $h'_1(T')=h_1(T)=1$ we obtain $h'_i(T')=\prod_{j=2}^i r'_j(T') \ge \prod_{j=2}^i r_j(T) =h_i(T).$
\end{IEEEproof}

We will now introduce some additional notation to capture the renewal nature of the system.
Let us define $h_i^{[k]}(t) = h_i(T_k+t)$ and $T^{[k]} = T_k - T_{k-1}.$
Thus, we consider solving the system in a sequence of rounds and we
 use the superscript $\cdot^{[k]}$ to denote round $k.$
The initial condition for round $k$ is taken from the ending state of round $k-1.$
Hence $h^{[k]}_i(0) = h^{[k-1]}_{i+1}(T^{[k-1]})$ for $i \ge 1.$

Let us call a sequence $h_i(0)$ {\em fixed point convergent} if
$h_i(0)$ is regularly ordered and we have
$r_2(T) \ge \max\{ R_1(T),h_1(0) \},$ and
$r_i(T) \ge r_{i-1}(0)$  for  $i \ge 3.$

\begin{lemma}\label{lem:wellsetiterate}
If $h_i (0)$ is fixed point convergent  then
$h_i^{[k]}(0)$ is fixed point convergent for each $k$ and is monotonically
increasing $k$ converging to a fixed point.
\end{lemma}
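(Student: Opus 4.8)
\textbf{Proof proposal for Lemma \ref{lem:wellsetiterate}.}

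The plan is to show that the property ``fixed point convergent'' is preserved under one round of the renewal map, and then extract monotone convergence from the monotonicity lemma just established. First I would verify the induction step: assuming $h_i^{[k-1]}(0)$ is fixed point convergent, I need $h_i^{[k]}(0)$ to be fixed point convergent as well. Recall $h_i^{[k]}(0) = h_{i+1}^{[k-1]}(T^{[k-1]})$, i.e. the round-$k$ initial condition is the index-shifted ending state of round $k-1$. Being fixed point convergent implies being regularly ordered, so by Lemma \ref{lem:decreasingr} the ending state $h_i^{[k-1]}(T^{[k-1]})$ is regularly ordered (as noted right after that lemma), hence so is its index shift $h_i^{[k]}(0)$; in particular $h_1^{[k]}(0) = h_2^{[k-1]}(T) = r_2^{[k-1]}(T) \le 1$ using the hypothesis $r_2(T)\ge R_1(T)$ together with $r_2(T)\le r_1(T)=1$ from regular ordering. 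The remaining conditions defining fixed point convergence for round $k$ — namely $r_2^{[k]}(T^{[k]}) \ge \max\{R_1^{[k]}(T^{[k]}), h_1^{[k]}(0)\}$ and $r_i^{[k]}(T^{[k]}) \ge r_{i-1}^{[k]}(0)$ for $i\ge 3$ — I would obtain by comparing round $k$ to round $k-1$ with Lemma \ref{lem:monosys}: the chain $r_i^{[k-1]}(T^{[k-1]}) \ge r_{i-1}^{[k-1]}(0)$ (which is exactly the fixed-point-convergence hypothesis for round $k-1$, reindexed) gives $r'_i(0) := r_{i+1}^{[k-1]}(T^{[k-1]}) \ge r_i^{[k-1]}(T^{[k-1]}) =: $ a quantity dominating $r_i^{[k-1]}(0)$, so the initial conditions of round $k$ dominate those of round $k-1$ componentwise, and Lemma \ref{lem:monosys} propagates this domination to the terminal states.

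Next I would establish monotonicity in $k$. The previous paragraph's comparison shows $h_i^{[k]}(0) \ge h_i^{[k-1]}(0)$ for all $i$ — precisely the hypothesis of Lemma \ref{lem:monosys} with the primed system being round $k$ and the unprimed system round $k-1$ — and applying that lemma at the terminal times gives $h_i^{[k+1]}(0) = h_{i+1}^{[k]}(T^{[k]}) \ge h_{i+1}^{[k-1]}(T^{[k-1]}) = h_i^{[k]}(0)$, so the sequence $k \mapsto h_i^{[k]}(0)$ is non-decreasing. It is bounded above (each $h_i \le 1$ on $[0,T]$, and $h_1^{[k]}(0)<1$ is preserved), hence converges pointwise in $i$ to some limit sequence $h_i^\star$. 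Passing to the limit in the defining relation $h^{[k]}_i(0) = h^{[k-1]}_{i+1}(T^{[k-1]})$ — using continuity of the round map with respect to initial conditions, which follows from the Duhamel formula and standard ODE dependence-on-initial-conditions — identifies $h_i^\star$ as a fixed point of the renewal map, i.e. $h_i^\star = h_{i+1}^{\star,\text{end}}(T^\star)$. Finally, the fixed-point-convergence inequalities pass to the limit (they are non-strict), so $h_i^\star$ is itself fixed point convergent, completing the claim.

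The main obstacle I anticipate is the bookkeeping around the index shift and the two-sided use of Lemma \ref{lem:monosys}: one must carefully check that the reindexed terminal inequalities $r_i^{[k-1]}(T^{[k-1]}) \ge r_{i-1}^{[k-1]}(0)$ really do furnish the componentwise domination $r_i^{[k]}(0) \ge r_i^{[k-1]}(0)$ needed to invoke monotonicity, and that the technical precondition $r_2(T) \ge \frac{1}{2-\frac{1}{N}}$ required by Lemmas \ref{lem:decreasingr} and \ref{lem:monosys} is implied by (or can be arranged alongside) fixed point convergence via the clause $r_2(T)\ge R_1(T) = \frac{1}{2-h_1(0)/N} \ge \frac{1}{2-1/N}$ when $h_1(0)\le 1$. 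A secondary subtlety is justifying the limit interchange: one needs that $T^{[k]}$ stays bounded (which follows since $h_1^{[k]}(0)$ increases toward a limit strictly below $1$, so $T^{[k]} = T^{\rm F}(N)$-type expression stays bounded), so that the Duhamel integrals converge and the limiting sequence genuinely solves the fixed-point equation rather than degenerating.
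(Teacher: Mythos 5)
Your proposal follows essentially the same route as the paper: preserve regular ordering of the shifted terminal state via Lemma \ref{lem:decreasingr}, get componentwise domination of round-$k$ initial conditions from the fixed-point-convergence inequalities of round $k-1$, propagate it to terminal states with Lemma \ref{lem:monosys}, and conclude by monotone boundedness plus continuity. Two write-up slips are worth fixing but do not affect the argument: $r_i^{[k-1]}(T^{[k-1]})$ does \emph{not} dominate $r_i^{[k-1]}(0)$ (the $r_i$ are non-increasing; the needed inequality $r_i^{[k]}(0)\ge r_i^{[k-1]}(0)$ comes directly from the hypothesis $r_{i+1}(T)\ge r_i(0)$), and $R_1(T)=\frac{1}{2-h_1(T)/N}=\frac{1}{2-1/N}$ exactly, since $h_1(T)=1$.
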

\begin{IEEEproof}
The proof proceeds by induction.
Assume $h_i^{[k]}(0)$ is fixed point convergent.
Let us consider the interval $t \in [0,T^{[k]}].$
By Lemma \ref{lem:decreasingr},
\begin{equation}\label{eqn:rRineq}
r^{[k]}_{i}(t) \ge R^{[k]}_{i-1}(t)
\end{equation}
 for all $i \ge 2.$
Since  $\frac{1}{2-x} \ge x$ for $x \in [0,1],$ we have $R^{[k]}_{i-1}(t) \ge r^{[k]}_{i-1}(t)$
hence $r^{[k]}_{i}(t) \ge r^{[k]}_{i-1}(t)$ for all $i \ge 2.$

We now consider $h_i^{[k+1]}(0) = h^{[k]}_{i+1}(T)$ for $i \ge 1.$
Let us first show $h_i^{[k+1]}(0) \ge h_i^{[k]}(0)\,.$
We have 
\[
h_1^{[k+1]}(0) = h_2^{[k]}(T) = r_2^{[k]}(T) \ge h_1^{[k]}(0)
\]
where the last step is by assumption that $h_i^{[k]}$ is fixed point convergent.
For $i \ge 2$ we have
\[
r_i^{[k+1]}(0) = r_{i+1}^{[k]}(T^{[k]}) \ge r_i^{[k]}(0) 
\]
where again, the last step is by assumption.
We now obtain for $i > 1$
\begin{align*}
h_i^{[k+1]}(0) &= h_1^{[k+1]}(0) \prod_{j=2}^i  r_{j}^{[k+1]}(0) 
\\&\ge  h_1^{[k]}(0) \prod_{j=2}^i  r_{j}^{[k]}(0) 
= h_i^{[k]}(0)\,.
\end{align*}

Now we will show that $h_i^{[k+1]}(0)$ is also fixed point convergent.
Clearly, the sequence is monotonically decreasing and positive.
For $i \ge 3$ we have
\[
r_i^{[k+1]}(0) = r_{i+1}^{[k]}(T) \ge R_i^{[k]}(T) = R_{i-1}^{[k+1]}(0)\,
\]
where the middle inequality uses \eqref{eqn:rRineq}.
Since we also have
\(
r_2^{[k+1]}(0) = r_{3}^{[k]}(T) \ge R_2^{[k]}(T) > R_{1}^{[k+1]}(0)\,
\)
we see that $h_i^{[k+1]}(0)$ is regularly ordered.

Consider now $r_2^{[k+1]}(T^{[k+1]}).$
Let $\tau$ be determined by $h_1^{[k]}(\tau) = h_1^{[k+1]}(0).$
Then for $t \le T^{[k+1]}-\tau$ we have $h_1^{[k]}(\tau+t) = h_1^{[k+1]}(t)$
hence  $\tau+T^{[k]} = T^{[k+1]}$ and $R_1^{[k]}(\tau+t) = R_1^{[k+1]}(t).$
Since $r_2^{[k+1]}(0) \ge r_2^{[k]}(0) \ge r_2^{[k]}(\tau)$ we see that
\eqref{eqn:rdiffeq} now implies that $r_2^{[k+1]}(t) \ge r_2^{[k]}(\tau+t).$
Hence 
\[
r_2^{[k+1]}(T^{[k+1]}) \ge r_2^{[k]}(T^{[k]}) \ge  \frac{1}{2-\frac{1}{N}}
\]
and since $r_2^{[k]}(T^{[k]}) = h_1^{[k+1]}(0)$ we have
\[
r_2^{[k+1]}(T^{[k+1]})  \ge  \max\{ h_1^{[k+1]}(0),\frac{1}{2-\frac{1}{N}}\}.
\]

We now apply Lemma \ref{lem:monosys} to obtain  for $i \ge 3,$
\[
r_i^{[k+1]}(T^{[k+1]}) \ge r_i^{[k]}(T^{[k]}) = r_{i-1}^{[k+1]}(0)\,.
\]

We conclude that $h_i^{[k]}$ is a monotonically increasing sequence that therefore has a limit.
Correspondingly, $T^{[k]}$ is a montonically decreasing sequence that also has limit.
The limit is a fixed point by continuity.
\end{IEEEproof}

\subsection{Fixed Points}
Simulations show that $h_i^{[k]}(0)$ converges in $k$ to a fixed point solution.
In this section we solve for the set of fixed points.

In general for $i \ge 1$ we have
$h^{[k]}_i (0)  = \frac{Q^{[k-1]}_{i+1}(T^{[k-1]})}{Q^{[k-1]}_{i}(T^{[k-1]})}.$
Since $Q^{[k]}_0(0) = 1$ for all $k$ we have
$Q^{[k]}_1(0) = h^{[k]}_1(0)  = e^{-\frac{T^{[k-1]}}{N}} Q^{[k-1]}_2(T^{[k-1]}).$
Since 
\(
Q^{[k]}_i(0) = h^{[k]}_i(0) Q^{[k]}_{i-1}(0)
\)
we  proceed inductively to obtain
\[
Q^{[k]}_i(0)
= e^{-\frac{T^{[k-1]}}{N}} Q^{[k-1]}_{i+1}(T^{[k-1]}).
\]

Hence, we can express $Q^{[k]}(t)$ directly in terms of $Q^{[k-1]}(T^{[k-1]})$ 
as follows.
\begin{align*}
Q^{[k]}_i(t) &
=e^{-\frac{T^{[k-1]}}{N}} \sum_{j=0}^{i-1}  Q^{[k-1]}_{i+1-j}(T^{[k-1]})\frac{1}{j!}t^j
+   N^i \sum_{j=i}^{\infty}  \frac{(t/N)^j}{j!}
\\&
=\sum_{j=0}^{i-1}  Q^{[k]}_{i-j}(0)\frac{1}{j!}t^j
+   N^i \sum_{j=i}^{\infty}  \frac{(t/N)^j}{j!}\,.
\end{align*}

Noting at the fixed point we have
$Q_{i}(T) = e^{\frac{T}{N}} Q_{i-1}(0)$
we obtain
\begin{align*}
e^{\frac{T}{N}}Q_{i-1}(0) &
= \sum_{j=0}^{i-1}  Q_{i-j}(0)\frac{1}{j!}T^j
+   N^i \sum_{j=i}^{\infty}  \frac{(T/N)^j}{j!}
\\
&
= \sum_{j=0}^{i}  Q_{i-j}(0)\frac{1}{j!}T^j
+   N^i \sum_{j=i+1}^{\infty}  \frac{(T/N)^j}{j!}
\end{align*}
Dividing through by $Q_{i-1}(0)$ and rearranging terms we obtain
\begin{align*}
h_i(0) &
= e^{\frac{T}{N}} - T - T \sum_{j=2}^{i} \frac{1}{j!}
\prod_{k=1}^{j-1}\frac{T}{h_{i-k}(0)}
\\&-  \frac{1}{Q_{i-1}(0)} N^i \sum_{j=i+1}^{\infty}  \frac{(T/N)^j}{j!}
\end{align*}
Define $h_i(0) = N$ for $i \le 0$ we can write this as
\begin{align*}
h_i(0) &
= e^{\frac{T}{N}} - T - T \sum_{j=2}^{\infty} \frac{1}{j!}
\prod_{k=1}^{j-1}\frac{T}{h_{i-k}(0)}
\end{align*}
Note that this confirms $h_1(0)= e^{\frac{T}{N}} - N(e^{\frac{T}{N}}-1).$
It follows that $h_i(0)$ is a monotonically decreasing in $i$
and therefore has a limit $h_\infty.$
For $T=0$ we obtain $h_i(0) = 1$ for all $i$ and 
if $T>0$ then we must have $h_\infty > 0.$
Taking limits we obtain
\begin{align*}
h_\infty &
= e^{\frac{T}{N}} - T - T \sum_{j=2}^{\infty} \frac{1}{j!}
\prod_{k=1}^{j-1}\frac{T}{h_\infty}
 &
= e^{\frac{T}{N}} - h_\infty (e^{\frac{T}{h_\infty}} - 1)
\end{align*}
or
\begin{equation}\label{eqn:Thform}
h_\infty  = e^{-T(\frac{1}{h_\infty} -\frac{1}{N})},\quad
T = -\frac{h_\infty\ln h_\infty}{1-\frac{h_\infty}{N}} \,.
\end{equation}
Now, $-\frac{h_\infty\ln h_\infty}{1-\frac{h_\infty}{N}}$ is a concave function
of $h_\infty$ on $[0,1].$ Let $\Tfixed(N)$ denote its  maximum.
We have $\Tfixed(\infty) = e^{-1}.$

\begin{lemma}
For each $T \le \Tfixed(N)$ there exists a fixed point with the given $T.$
\end{lemma}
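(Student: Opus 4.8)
The plan is to reverse the derivation of the previous subsection: given a target $T$, produce the candidate limit $h_\infty$, reconstruct the entire initial sequence $\{h_i(0)\}$ from the recursion derived there, show it is a bona fide positive strictly decreasing sequence with limit $h_\infty$, and then check that the associated trajectory closes up into a genuine fixed point whose round length is exactly $T$. For $T=0$ one simply takes $h_i(0)=1$ for all $i$, so assume $0<T\le \Tfixed(N)$. Here I would first record the structure of $g(h):=-\frac{h\ln h}{1-h/N}$: it is concave on $[0,1]$ with $g(0^{+})=g(1)=0$, and its maximum value equals $\Tfixed(N)$, attained at $h=\Tfixed(N)$ — indeed at the stationary point $-\ln h = 1-h/N$ one computes $g(h)=h$, which is precisely the equation defining $\Tfixed(N)$ in Theorem \ref{T:fluid bound}. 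Hence $g$ increases on $(0,\Tfixed(N))$ and decreases on $(\Tfixed(N),1)$, so by the intermediate value theorem there is a root $h_\infty\in[\Tfixed(N),1)$ of $g(h_\infty)=T$, equivalently a solution $(T,h_\infty)$ of \eqref{eqn:Thform}; note $h_\infty\ge \Tfixed(N)\ge T$.

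Next I would set $h_i(0)=N$ for $i\le 0$ and define $h_i(0)$ for $i\ge 1$ by the same recursion derived earlier, namely $h_i(0)=e^{T/N}-T-T\sum_{j\ge 2}\frac{1}{j!}\prod_{k=1}^{j-1}\frac{T}{h_{i-k}(0)}$. Writing $\phi(a_1,a_2,\dots)$ for the right side as a function of the earlier terms, it is finite when the $a_k$ are bounded below by a positive constant, is strictly increasing in each coordinate, and satisfies $\phi(c,c,\dots)=\psi(c):=e^{T/N}+c-c\,e^{T/c}$, where $\psi(L)=L$ iff $g(L)=T$. A one‑line computation gives $\psi'(c)=1-e^{T/c}(1-T/c)>0$ for $c\in[T,\infty)$, because $p(s):=e^{s}(1-s)$ is strictly decreasing on $(0,\infty)$ with $p(0)=1$; thus $\psi$ is strictly increasing on $[T,\infty)$, an interval containing both $h_\infty$ and $N$. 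A strong induction then simultaneously defines $h_i(0)$ and proves $h_\infty\le h_i(0)<h_{i-1}(0)$: the base case is $h_1(0)=\psi(N)\in[\psi(h_\infty),N)=[h_\infty,N)$ (the upper endpoint since $\psi(N)=N-(N-1)e^{T/N}<N$), and the inductive step follows from monotonicity of $\phi$ applied to $h_{i-1}(0)<h_{i-2}(0)$ and $h_{i-k}(0)\ge h_\infty$. Consequently $\{h_i(0)\}$ is positive, strictly decreasing, bounded below by $h_\infty>0$, hence converges to some $L\ge h_\infty$; passing to the limit in the recursion (legitimate since $T/h_{i-k}(0)\le T/h_\infty$ dominates the series termwise) yields $L=\psi(L)$, i.e. $g(L)=T$, and since $L\ge h_\infty$ while $h_\infty$ is the only root of $g=T$ on $[\Tfixed(N),1]$, we conclude $L=h_\infty$.

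It remains to check this is a fixed point with parameter $T$. I would build $Q_i(t)$ from the round recursion using the initial data $\{h_i(0)\}$ just constructed and set $h_i(t)=Q_i(t)/Q_{i-1}(t)$. Since $h_0(t)\equiv N$, the first coordinate solves $\frac{d}{dt}h_1=1-h_1/N$ with $h_1(0)=N-(N-1)e^{T/N}$, so $h_1(t)=N-(N-1)e^{(T-t)/N}$ first reaches $1$ at time exactly $t=T$; the round length is therefore $T$. Reversing the algebra of the previous subsection, the recursion defining $\{h_i(0)\}$ is exactly the statement $Q_i(T)=e^{T/N}Q_{i-1}(0)$, so the end‑of‑round state satisfies $h_i^{[1]}(0)=Q_{i+1}(T)/Q_i(T)=Q_{i+1}(0)/Q_i(0)=h_i(0)$, i.e. the sequence is a fixed point.

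The substantive difficulty is the second step: guaranteeing that the reconstruction never drives any $h_i(0)$ to $0$ and that its limit is the intended root rather than $0$. This hinges on the monotonicity of the scalar map $\psi$ on $[T,\infty)$ (the inequality $e^{s}(1-s)<1$) together with the identity $g(\Tfixed(N))=\Tfixed(N)$, which forces $h_\infty\ge \Tfixed(N)\ge T$ so that $\psi$ is only ever evaluated where it is increasing; if $T$ exceeded $\Tfixed(N)$ this control would fail, matching the claim. For $T<\Tfixed(N)$ there is also a second root of \eqref{eqn:Thform} on the increasing branch of $g$, but we do not need it — the larger root chosen above (the one the iteration from $h_i(0)\equiv 0$ converges to) suffices for existence.
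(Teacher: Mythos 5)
Your proof is correct and its core step is exactly the paper's argument: an inductive lower bound $h_i(0)\ge h_\infty$ obtained from the coordinatewise monotonicity of the recursion together with the fixed-point identity $h_\infty = e^{T/N}-h_\infty\bigl(e^{T/h_\infty}-1\bigr)$, which keeps the sequence well defined. The additional material you supply --- existence of the root $h_\infty\in[\Tfixed(N),1)$ via concavity of $-h\ln h/(1-h/N)$, the monotone decrease and identification of the limit, and the verification that the reconstructed trajectory closes up with round length $T$ --- consists of details the paper delegates to the surrounding discussion rather than a genuinely different route.
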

\begin{IEEEproof}
We need only show that the recursion for $h_i(0)$ is well behaved.
Let $T$ be as above and assume $h_j(0) \ge h_\infty = h_\infty(T).$
Then we have
\begin{align*}
h_i(0) &
= e^{\frac{T}{N}} - T - T \sum_{j=2}^{\infty} \frac{1}{j!}
\prod_{k=1}^{j-1}\frac{T}{h_{i-k}(0)}
\\& \ge e^{\frac{T}{N}} - T - T \sum_{j=2}^{\infty} \frac{1}{j!}
\prod_{k=1}^{j-1}\frac{T}{h_\infty}
\\& \ge e^{\frac{T}{N}} - h_\infty (e^\frac{T}{h_\infty} -1)
\\& =  h_\infty
\end{align*}
Since $h_i = N$ for  $i \le 0$ we see that the sequence is well defined.
\end{IEEEproof}

Let us denote the fixed point corresponding to $\Tfixed(N)$ as $\hfixed = \hfixed(N).$
Since $\hfixed_\infty(N) = \text{argmax} \frac{-h \ln h}{1-\frac{h}{N}}$ a little calculus shows that
$\hfixed_\infty(N)$ is determined by $-\ln \hfixed_\infty(N) = 1-\frac{h_\infty^*(N)}{N}$ or
$e \hfixed_\infty(N) = e^{\frac{\hfixed_\infty(N)}{N}} \ge 1 + \frac{\hfixed_\infty(N)}{N}$
and we obtain the bound 
\begin{equation}\label{eqn:hstarbound}
\hfixed_\infty(N) \ge \frac{N}{Ne-1}.
\end{equation}
Moreover we have from \eqref{eqn:Thform}
\begin{equation}\label{eqn:Tequality}
\Tfixed(N) =  \frac{-\hfixed_\infty(N) \ln \hfixed_\infty(N)}{1-\frac{\hfixed_\infty(N)}{N}} = \hfixed_\infty(N).
\end{equation}
From \eqref{eqn:Thform} we obtain
\begin{equation}\label{eqn:h1star}
\hfixed_1 = \hfixed_\infty e^{\Tfixed/\hfixed_\infty} - \hfixed_\infty = (e-1)\hfixed_\infty 
\end{equation}

By Lemma \ref{lem:monotonicity} the fixed point $\hfixed(N)$ gives an upper bound
on the solution for $N.$  This shows that the asymptotic gain in 
the transmission velocity of the file is upper bounded by that
from $\hfixed(N).$  In the case $N=\infty$ we prove in the next section
that the solution converges to $\hfixed(\infty).$
Simulations indicate that convergence to $\hfixed(N)$ occurs for all $N >1$
but we  do not have a proof for the genereal case.
We can show, however,
that the asymptotic acceleration is that determined by the fixed point.

\subsection{Convergence}

The case $N=\infty$ is the Fountain code case (infinitely low rate).
Let us initialize in round $1$ with $h^{[1]}_i(0)=0$ for $i>0.$ 
It is immediate that 
\[
Q^{[1]}_i(t) = \frac{1}{i!}t^i;\,\,\,
h^{[1]}_i(t) = \frac{1}{i} t;\,\,\,
r^{[1]}_i(t) = \frac{i-1}{i}.\,
\]
Hence, the case $k=1$ is fixed point convergent in that $h_i^{[1]}(t) =  \frac{t}{i}$
and $r_i(t)=R_{i-1}(t)$ for all $t.$
Hence the solution in round $1$ is fixed point convergent.

\begin{lemma}\label{lem:infiniteconverge}
In the $N=\infty$ case the sequence $h^{[k]}$ is monotonically increasing and approaches the limit $\hfixed.$
\end{lemma}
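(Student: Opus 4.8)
The plan is to combine the renewal--monotonicity machinery already developed with the explicit description of the fixed points, identifying $\bar h := \lim_k h^{[k]}(0)$ first as \emph{a} fixed point and then pinning it down as the maximal one, $\hfixed$. First, recall that for $N=\infty$ round one is initialized with $h^{[1]}_i(0)=0$ and was shown above to satisfy $h^{[1]}_i(t)=t/i$, $r^{[1]}_i(t)=R_{i-1}(t)$, so round one is fixed point convergent. Lemma \ref{lem:wellsetiterate} then applies verbatim and yields that $h^{[k]}_i(0)$ is fixed point convergent for every $k$, is monotonically non-decreasing in $k$, and converges as $k\to\infty$ to a limiting sequence $\bar h$ that is itself a fixed point, with associated round length $\bar T=\lim_k T^{[k]}$. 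This already gives the monotonicity half of the statement; it remains to show $\bar h=\hfixed$.

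Next I would prove the two inclusions $\bar h\le\hfixed$ and $\bar h\ge\hfixed$ coordinatewise. For the \emph{upper} bound, start from round two: $h^{[2]}_i(0)=h^{[1]}_{i+1}(T^{[1]})=1/(i+1)$ is strictly positive, regularly ordered, and a short direct estimate shows it is dominated coordinatewise by $\hfixed$ (whose tail is $\hfixed_\infty(\infty)=e^{-1}$ and whose first coordinate is $(e-1)e^{-1}$). Since $\hfixed$, viewed as a sequence constant in $k$, is a fixed point of the round map, Lemma \ref{lem:monosys} together with Lemma \ref{lem:monotonicity} shows the round map is monotone on this class of regularly ordered initial conditions, so by induction $h^{[k]}_i(0)\le\hfixed_i$ for all $k\ge2$ and hence $\bar h\le\hfixed$.

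For the \emph{lower} bound I would use the fixed-point recursion. For $N=\infty$ any fixed point $h$ satisfies $h_i(0)=1-T-T\sum_{j\ge2}\frac1{j!}\prod_{k=1}^{j-1}\frac{T}{h_{i-k}(0)}$ with the convention $h_i(0)=\infty$ for $i\le0$, and its parameter obeys $T\le\Tfixed(\infty)=e^{-1}$, since over fixed points $T$ equals the value of the concave function $-h_\infty\ln h_\infty$ on $[0,1]$, whose maximum is attained at $\hfixed_\infty$. An easy induction on $i$ shows this recursion is monotonically non-increasing in $T$ (decreasing $T$ enlarges $1-T$, shrinks the multiplier $T$, and shrinks every factor $T/h_{i-k}(0)$, using the inductive hypothesis for the denominators). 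Hence $\hfixed$, corresponding to the largest admissible $T$, is the coordinatewise-minimal fixed point, so $\bar h\ge\hfixed$. Combining the two bounds gives $\bar h=\hfixed$, and therefore $T^{[k]}\uparrow\Tfixed(\infty)=e^{-1}$, which is the announced speed-up.

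The main obstacle is the upper-bound step: transferring the pointwise-in-$t$ domination of Lemma \ref{lem:monotonicity} to a comparison \emph{after one round}, since the two systems being compared reach $h_1=1$ at different times $T^{[k]}$. This is precisely what Lemma \ref{lem:monosys} is designed to handle, so the real work is checking its hypotheses --- that $\hfixed$ and each $h^{[k]}(0)$ are regularly ordered and satisfy the required chain $h'_1(0)\ge h_1(0)$, $r'_i(0)\ge r_i(0)$ --- and in particular dealing with the degenerate round-one initial condition $h^{[1]}(0)=0$ by starting the comparison induction at round two, where the iterate is already strictly positive and regularly ordered.
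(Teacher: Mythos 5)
Your first half coincides with the paper's: round one with $h^{[1]}_i(0)=0$ is fixed point convergent, so Lemma \ref{lem:wellsetiterate} gives monotone increase of $h^{[k]}(0)$ and convergence to some fixed point $\bar h$. Where you diverge is in identifying $\bar h$. The paper works with the scalar $T^{[k]}$ only: initializing at $\hfixed(\infty)$ and invoking Lemma \ref{lem:monotonicity} (pure coordinatewise comparison of solutions in $t$) forces $T_k \ge k\,\Tfixed(\infty)$, hence $\lim_k T^{[k]} \ge \Tfixed(\infty)$; since the limit is a fixed point and every fixed point satisfies $T \le \Tfixed(\infty)$ with equality only at $\hfixed(\infty)$ (the unique maximizer of $-h\ln h$), the limit must be $\hfixed(\infty)$. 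You instead prove the two coordinatewise inclusions $\bar h \le \hfixed$ and $\bar h \ge \hfixed$. Your lower bound --- monotonicity in $T$ of the fixed-point recursion, so that the largest admissible $T$ gives the minimal fixed point --- is a genuinely different and serviceable ingredient, though it leans on the Fixed Points subsection to guarantee that $\bar h$ itself is of that parametric form.

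The gap is in your upper bound. Lemma \ref{lem:monosys} requires the \emph{ratio} domination $r'_i(0) \ge r_i(0)$ for all $i\ge 2$ in addition to $h'_1(0)\ge h_1(0)$; you only propose to check that $h^{[2]}_i(0)=\frac{1}{i+1}$ is dominated coordinatewise by $\hfixed_i$. The needed inequality $\hfixed_i/\hfixed_{i-1} \ge \frac{i}{i+1}$ does not follow from coordinatewise domination, nor from regular ordering of $\hfixed$ (which only yields $r_i \ge \frac{i-1}{i}$), and you do not verify it. You can repair this without Lemma \ref{lem:monosys} at all: Lemma \ref{lem:monotonicity} gives $h_i(t;0\text{-init}) \le h_i(t;\hfixed\text{-init})$ for all $t$, and since the $\hfixed$-initialized solution is periodic with $h_{i+1}(T)=h_i(0)=\hfixed_i$ at each round end and $h_{i+1}$ increasing within a round, one gets $h^{[k+1]}_i(0) = h_{i+1}(T_k;0\text{-init}) \le \hfixed_i$ directly --- which is in effect the comparison the paper uses, transferred from sequences to the single quantity $T$.
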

\begin{IEEEproof}
Since the solution above in round $1$ is fixed point convergent in the $N=\infty$
case, convergence of $h^{[k]}$ follows directly 
from Lemma \ref{lem:wellsetiterate}.  Since $T^{[k]}$ is decreasing
it has a limit and it follows from Lemma \ref{lem:monosys}
that this limit must be $\Tfixed(\infty).$
Indeed, if the limit were less than $\Tfixed(\infty)$ then we would obtain
a contraction to Lemma \ref{lem:monosys} by initializing the system
with $\hfixed(\infty)$ and comparing the $0$ initialization.
Hence the limit of $h^{[k]}$ is $\hfixed(\infty).$
\end{IEEEproof}

In the case of finite $N$ we see that $r_i^{[1]}(t)$ is increasing and so
the above monotonicity argument does not succeed.
Simulations indicate that in subsequent rounds the sequence does become fixed point convergent, but
we have not been able to prove this.
Thus we are unable to prove convergence of  $h^{[k]}$
although we conjecture that it converges to $\hfixed(N).$

The upper bound in Theorem \ref{T:fluid bound} is obtained by initializing with the fixed point $\hfixed(N),$ which
obviously has the stated asymptotic delay.
The lower bound is obtained by showing that the solution eventually exceeds a fixed point convergent condition.
Then Lemma \ref{lem:wellsetiterate} provides the lower bound.

Before  giving the proof for the lower bound we develop some more preliminary results.
\begin{lemma}\label{lem:decreasinginN}
For a fixed initial condition we can parameterize the solution on $[0,T]$ by $h_1.$
Then $h_i( h_1;N)$ is a decreasing function of $N.$
\end{lemma}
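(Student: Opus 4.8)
The plan is to change the independent variable from time $t$ to $h_1$, write down the resulting scalar cascade of ODEs, and then induct on $i$, using at each step that the reparameterized equation for $h_i$ is \emph{affine} in $h_i$, so that an elementary differential inequality (comparison) argument applies.

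\emph{Reparameterization.} On $[0,T]$ (with $T=T_1$) no node other than node $1$ has decoded, so every $h_i,\ i\ge 1,$ obeys the smooth equation \eqref{eqn:hdiffeq}, $\frac{d}{dt}h_i = 1 - h_i/h_{i-1},$ with $h_0\equiv N.$ Since $h_1\le 1 < N$ throughout, $\frac{d}{dt}h_1 = 1 - h_1/N > 0,$ so $h_1$ is a strictly increasing function of $t$ on $[0,T]$ and can serve as the new variable; writing $h_i=h_i(h_1;N)$ we get, for $i\ge 2,$
\[
\frac{dh_i}{dh_1} \;=\; \frac{1 - h_i/h_{i-1}}{1 - h_1/N}\,.
\]
I would also record that the ordering $0\le h_i\le h_{i-1}$ of the (assumed non-increasing) initial data is preserved along the flow: where $h_i=h_{i-1},$ one has $\frac{d}{dt}(h_{i-1}-h_i) = 1 - h_{i-1}/h_{i-2}\ge 0$ (using inductively $h_{i-1}\le h_{i-2},$ base $h_0=N\ge h_1$), so the numerator $1-h_i/h_{i-1}$ stays nonnegative and, moreover, each $h_{i-1}$ is non-decreasing in $t.$

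\emph{Induction on $i.$} Fix $N_1<N_2$ and let $z=h_i(\cdot;N_1),\ y=h_i(\cdot;N_2).$ The displayed equation is affine in $h_i$, say $\frac{dh_i}{dh_1}=\beta_N(h_1)-\lambda_N(h_1)\,h_i$ with $\beta_N=(1-h_1/N)^{-1}$ and $\lambda_N=\bigl(h_{i-1}(h_1;N)(1-h_1/N)\bigr)^{-1}>0,$ so
\[
\frac{d}{dh_1}(z-y) \;=\; -\lambda_{N_1}(h_1)\,(z-y)\;+\;\bigl[\,G_{N_1}(y,h_1)-G_{N_2}(y,h_1)\,\bigr],
\]
where $G_N(w,h_1)=\frac{1-w/h_{i-1}(h_1;N)}{1-h_1/N}.$ For the base case $i=2$ we have $h_{i-1}=h_1,$ which is $N$-independent, so the defect equals $\frac{1-y/h_1}{1-h_1/N_1}-\frac{1-y/h_1}{1-h_1/N_2}\ge 0$ because $1-y/h_1\ge 0$ and $1-h_1/N_1\le 1-h_1/N_2.$ For the step, assume $h_{i-1}(h_1;N)$ is non-increasing in $N$; the defect is evaluated at the actual $N_2$-solution $y=h_i(h_1;N_2)\le h_{i-1}(h_1;N_2)\le h_{i-1}(h_1;N_1),$ hence $0\le 1-y/h_{i-1}(N_2)\le 1-y/h_{i-1}(N_1)$ and again $1-h_1/N_1\le 1-h_1/N_2,$ giving $G_{N_1}(y,h_1)\ge G_{N_2}(y,h_1).$ In both cases $z-y$ vanishes at $h_1=h_1(0)$ and satisfies $\frac{d}{dh_1}(z-y)\ge -\lambda_{N_1}(z-y),$ so $z-y\ge 0$ on $[h_1(0),1],$ i.e. $h_i(h_1;N_1)\ge h_i(h_1;N_2),$ closing the induction. (Running the same estimates with strict inequalities gives strict decrease except in the degenerate all-constant configuration.)

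\emph{Main obstacle.} The genuine difficulty is the physically natural datum $h_i(0)=0,$ for which the reparameterization is singular at $h_1=0$ (there $h_i\approx h_1/i,$ so $h_{i-1}\to 0,$ $1/h_{i-1}$ blows up, and $\lambda_{N}$ is unbounded near the start), so the Lipschitz hypothesis underpinning the comparison step is not literally available. One might instead try the ratios $r_i=h_i/h_{i-1}$ of \eqref{eqn:rdiffeq}, which are non-singular at $h_1=0,$ but their vector field is not transparently monotone in $N$ since the sign of $1-r_i(2-r_{i-1})$ varies from round to round; the direct argument above avoids this. I would therefore first prove the lemma for \emph{strictly positive} non-increasing initial data --- where each $h_{i-1}$ stays above $h_{i-1}(0)>0$ on $[h_1(0),1]$ (it is non-decreasing in $t$), the reparameterized field is Lipschitz in $h_i,$ and the argument is rigorous --- and then recover the $h_i(0)=0$ case by approximating the zero condition from above by positive non-increasing data and letting the perturbation tend to $0,$ invoking the monotone dependence on initial data from Lemma \ref{lem:monotonicity} together with continuous dependence of solutions on the data. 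Everything else is bookkeeping.
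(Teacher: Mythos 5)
Your proof takes exactly the paper's route: reparameterize time by $h_1$, write the cascaded ODE $\frac{dh_i}{dh_1} = \bigl(1 - h_i/h_{i-1}\bigr)\big/\bigl(1 - h_1/N\bigr)$, and induct on $i$; you simply make explicit the Gronwall/comparison step that the paper compresses into ``it easily follows.'' The one place you go beyond the paper is in flagging that for the physically relevant zero initial data the coefficient $\lambda_N \approx (i-1)/h_1$ is non-Lipschitz at $h_1=0$, so the comparison argument is not literally justified there; the paper passes over this silently, and your remedy --- prove the result for strictly positive non-increasing data and then pass to the limit via Lemma~\ref{lem:monotonicity} and continuous dependence --- is a sound way to close that gap.
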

\begin{IEEEproof}
We consider parameterizing the solution on $[0,T]$
by $h_1$ which spans the interval $[h_1(0),1].$ Then we have $h_2(h_1)$ satisfies
\[
\frac{d}{dh_1} h_2 = \Bigl(1-\frac{h_2}{h_1}\Bigr) \Bigl(\frac{1}{1-\frac{h_1}{N}}\Bigr)
\]
from which it easily follows that $h_2(h_1)$ is a decreasing function of $N.$
Now assume that  $h_{i-1}(h_1)$ is a decreasing function of $N.$  Since we have
\[
\frac{d}{dh_1} h_i = \Bigl(1-\frac{h_i}{h_{i-1}}\Bigr) \Bigl(\frac{1}{1-\frac{h_1}{N}}\Bigr)
\]
it is easily seen that $h_i(h_1)$ is a decreasing function of $N.$
Hence the lemma follows by induction.
\end{IEEEproof}

\begin{lemma}\label{lem:Nspeedup}
For any $N \ge N'>1$ we have
$h(\gamma t;N') \ge h( t;N)$ where
$\gamma = \frac{1-\frac{1}{N}}{1-\frac{1}{N'}}.$
\end{lemma}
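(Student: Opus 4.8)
The plan is to prove the stronger vectorial statement $h_i(\gamma t;N')\ge h_i(t;N)$ for every $i\ge 1$ on the interval under consideration (the one on which $h_1\le 1$, i.e.\ up to node~$1$'s decoding time), by a direct time--changed differential comparison using the ancillary linear--ODE result quoted at the start of this proof. First I set $\tilde h_i(t):=h_i(\gamma t;N')$. Differentiating and using the defining ODE for the $N'$--system, $\tilde h_i$ obeys $\frac{d}{dt}\tilde h_i(t)=\gamma\bigl(1-\tilde h_i(t)/\tilde h_{i-1}(t)\bigr)$ for $i>1$ and $\frac{d}{dt}\tilde h_1(t)=\gamma\bigl(1-\tilde h_1(t)/N'\bigr)$, with $\tilde h_i(0)=0$; here I use the convention $\tilde h_0\equiv N'$, $h_0\equiv N$. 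The one algebraic fact that makes the whole argument work is the identity
\[
\gamma-1\;=\;\frac{\gamma}{N'}-\frac{1}{N}\;=\;\frac{N-N'}{N(N'-1)}\;\ge\;0,
\]
which follows by direct substitution of $\gamma=(1-\tfrac1N)/(1-\tfrac1{N'})$.

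Next I induct on $i$. Put $\Delta_i(t)=\tilde h_i(t)-h_i(t)$, so $\Delta_i(0)=0$; writing $\tilde h_{i-1}=h_{i-1}+\Delta_{i-1}$, a short rearrangement gives $\frac{d}{dt}\Delta_i(t)=-a_i(t)\Delta_i(t)+b_i(t)$ with $a_i(t)=\gamma/\tilde h_{i-1}(t)>0$ and $b_i(t)=(\gamma-1)-h_i(t)\bigl(\gamma/\tilde h_{i-1}(t)-1/h_{i-1}(t)\bigr)$. For $i=1$ the identity above yields $b_1(t)=(\gamma-1)\bigl(1-h_1(t)\bigr)\ge 0$ because $h_1(t)\le 1$. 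For $i>1$, the induction hypothesis $\tilde h_{i-1}(t)\ge h_{i-1}(t)>0$ gives $\gamma/\tilde h_{i-1}-1/h_{i-1}\le(\gamma-1)/h_{i-1}$, hence $b_i(t)\ge(\gamma-1)\bigl(1-h_i(t)/h_{i-1}(t)\bigr)\ge 0$, the last step because the fluid system keeps $h_i\le h_{i-1}$ for all $t$ (if equality held then $\dot h_i=0<\dot h_{i-1}$, and decoding jumps only enlarge the gap). Since $a_i>0$ is bounded above and $b_i\ge 0$, the quoted ancillary result gives $\Delta_i(t)\ge 0$, closing the induction. (An optional alternative route is to reduce, via Lemma~\ref{lem:decreasinginN} and the fact that $h_i$ is increasing in the $h_1$--parametrization, to the single scalar inequality $N'(1-e^{-\gamma t/N'})\ge N(1-e^{-t/N})$; but the ODE comparison above is more self--contained and avoids having to control that scalar inequality past small $t$.)

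The only genuine subtlety is the behaviour at $t=0$, where $a_i(t)$ and the ratios $h_i/h_{i-1}$ are nominally $0/0$. This is handled using the expansion $h_i(t)=\tfrac{t}{i}(1+O(t/N))$ (so $\Delta_i(t)\sim(\gamma-1)t/i\ge 0$ near $0$) together with Duhamel's formula on $[\delta,T]$ and letting $\delta\downarrow 0$; since $a_i\ge 0$ and $b_i\ge 0$ one gets $\Delta_i(t)=\int_0^t e^{-\int_s^t a_i(u)\,du}b_i(s)\,ds\ge 0$ in the limit. I expect this limiting bookkeeping, plus checking the endpoint relation $\gamma T_1(N)\ge T_1(N')$ (so that the $N'$--trajectory has itself reached decoding by the end of the interval, which reduces to monotonicity of $x\mapsto x\ln\tfrac{x+1}{x}$), to be the main routine obstacle; the mathematical heart is just the exact identity $\gamma-1=\gamma/N'-1/N$, which forces the vector--field comparison at $i=1$ and, via $h_i\le h_{i-1}$, propagates it to all $i$.
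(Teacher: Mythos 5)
Your core argument is correct and its mechanism for $i>1$ is genuinely different from the paper's. Both proofs start from the same time change and the same algebraic fact (your identity $\gamma-1=\gamma/N'-1/N$ is the paper's identity $\gamma(x)\,(1-x/N')=1-x/N$ specialized appropriately), and both handle $h_1$ by an ODE comparison of the form $\dot\Delta_1=-a\Delta_1+b$ with $b\ge 0$. Where you diverge is at the higher coordinates: the paper does \emph{not} compare the $h_i$, $i\ge 2$, in time at all; it invokes Lemma \ref{lem:decreasinginN} to reparameterize the trajectory by $h_1$ and transfers the $h_1$ comparison to all $i$ through monotonicity in $N$ of $h_i(h_1;N)$. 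You instead run the ODE comparison level by level, propagating $\tilde h_{i-1}\ge h_{i-1}$ into $b_i\ge(\gamma-1)(1-h_i/h_{i-1})\ge 0$. Your route is more self-contained (it does not need Lemma \ref{lem:decreasinginN}) and makes the role of the identity transparent; the paper's route avoids having to track the singular coefficient $\gamma/\tilde h_{i-1}$ near $t=0$, which you correctly flag as the routine obstacle and patch via the expansion $h_i(t)=\tfrac{t}{i}(1+O(t/N))$ and Duhamel. Incidentally, your endpoint relation $\gamma T_1(N)\ge T_1(N')$ is the correct direction (the paper's text states it with the inequality reversed, which appears to be a typo).

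The one substantive omission is that the lemma is used, and is proved in the paper, for \emph{all} $t$, not just $t\in[0,T_1(N)]$; in the main lower-bound proof it is applied at a time $\tau$ that may lie well beyond the first decoding time. Your induction as written lives on the first round (and should also note explicitly that $\tilde h_1$ jumps to $N'$ at $t=T_1(N')/\gamma<T_1(N)$ inside that round, though your inequalities survive this since $\tilde h_1=N'\ge h_1$ there). The paper completes the argument by a renewal step: at $t=T_1(N)$ it artificially reduces the state $h(\gamma T_1(N);N')$ down to $h(T_1(N);N)$, reruns the comparison on $[T_1(N),T_2(N)]$, and then invokes Lemma \ref{lem:monotonicity} to remove the reduction; iterating over rounds gives the claim for all $t$. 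You would need to append this (or an equivalent index-shifting argument that accounts for $h_1$ saturating at $N$ versus $N'$ after decoding) to make the proof cover the statement as it is actually used.
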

\begin{IEEEproof}
By Lemma \ref{lem:decreasinginN} we see that for round $[1]$ we
have $h_i(h_1;N') \ge h_i(h_1;N)$  where we have parameterized by $h_1$ instead of $t$
and indicated explicit dependence on $N.$  

Let us define $\gamma(x) = \frac{1-\frac{x}{N}}{1-\frac{x}{N'}}$
so the above $\gamma$ is $\gamma(1).$  Note that $\gamma(x)$ is increasing in $x$ on $[0,1].$
Now consider a fixed initial condition.  Then for $\gamma t  \le T(N')$ and $t \le T(N)$ we have
\begin{align*}
\frac{d}{dt} h_1(\gamma t;N') &=  \gamma (1-\frac{h_1(\gamma t;N')}{N'}) \\
& =  \gamma (h_1(\gamma t;N')) (1-\frac{h_1(\gamma t;N')}{N'}) \\&\quad  + (\gamma(1)-\gamma (h_1(\gamma t;N')))(1-\frac{h_1(\gamma t;N')}{N'}) \\
& =   (1-\frac{h_1(\gamma t;N')}{N}) \\&\quad + (\gamma(1)-\gamma (h_1(\gamma t;N')))(1-\frac{h_1(\gamma t;N')}{N'}) \\
\end{align*}
and we obtain
\begin{align*}
\frac{d}{dt} (h_1(\gamma t;N') &-h_1(t;N)) = -\frac{1}{N}(h_1(\gamma t;N') -h_1(t;N))  \\
&  + (\gamma(1)-\gamma (h_1(\gamma t;N')))(1-\frac{h_1(\gamma t;N')}{N'}) \\
\end{align*}
hence $h_1(\gamma t;N') \ge h_1(t;N).$  It follows that $T(N) \le  \gamma^{-1}T(N').$
By Lemma \ref{lem:decreasinginN} we have $h_i(\gamma t;N') \ge h_i(t;N)$ for all $i$ and
also $h(T(N');N') \ge h(T(N),N).$ We can now conclude that
$h(\gamma t;N') \ge h(t;N)$ for all $t \in [0,T(N)].$

Now consider the system initialized  with $h=0.$  By the above we have
$h(\gamma t;N') \ge h(t;N)$ for $t  \le  T_1(N).$
If we reduce $h(\gamma T_1(N);N')$ to set it equal to $h( T_1(N);N)$ then the above argument
would again yield  $h(\gamma t ;N')\ge h(t;N)$ for $t  \in  [T_1(N),T_2(N)].$
Now, Lemma \ref{lem:monotonicity} implies that without the reduction $h(\gamma t ;N')$ would be larger still
so we have $h(\gamma t ;N')\ge h(t;N)$ for $t  \in  [T_1(N),T_2(N)]$ for the actual solution.
The same argument can be repeated for $t\in[T_i(N),T_{i+1}(N)]$ for $i=2,3,...$ and hence by induction we obtain
$h(\gamma t ;N')\ge h(t;N)$ for all $t.$
\end{IEEEproof}

Let us define for $i\ge 1$ and $N \in (1,\infty],$
\[
h^*_i(N)=\frac{1}{(i+1)-\frac{i}{N}}\text{ and } r^*_{i}(N) = \frac{h^*_i}{h^*_{i-1}}
\]
with $h^*_0(N)=N.$
The key property of this definition is that for $i \ge 2$ we have
\begin{equation}\label{eqn:rstareq}
r^*_i(N) = \frac{1}{2-r^*_{i-1}(N)}
\end{equation}

\begin{lemma}\label{lem:Rinit}
The initial condition $h^*(M)$
is fixed point convergent
for a given $N \le M$ if, given the initial condition, we have
$h_2(T) \ge h^*_1(N).$
Moreover, we then have $r_i(T) \ge r_{i-1}^*(N),$ for all $i \ge 3,$
hence $h_i(T) \ge h^*_{i-1}(N),$ for all $i \ge 3.$
\end{lemma}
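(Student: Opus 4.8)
The statement is essentially a verification: one must check that the sequence $h_i(0):=h^*_i(M)$, $i\ge 1$, run forward under the $N$-system, satisfies the three conditions defining a \emph{fixed point convergent} initial condition, and then read off the two ``moreover'' inequalities. Three simplifications organize the work. Since $h_1(T)=1$, we have $r_2(T)=h_2(T)$ and $R_1(T)=\frac{1}{2-h_1(T)/N}=\frac{1}{2-1/N}=h^*_1(N)$. At $t=0$, the initial ratios are $r_i(0)=h^*_i(M)/h^*_{i-1}(M)=r^*_i(M)$ for $i\ge 2$ (with $r_1(0)=h^*_1(M)/N$ since $h_0\equiv N$ in the $N$-system). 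In particular the hypothesis $h_2(T)\ge h^*_1(N)$ is exactly $r_2(T)\ge R_1(T)=\frac{1}{2-1/N}$, which is also the extra hypothesis needed to invoke Lemma \ref{lem:decreasingr}.

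First I would check that $h^*(M)$ is regularly ordered for the $N$-system. Strict positivity, monotone decrease in $i$, and $h^*_1(M)=\frac{1}{2-1/M}\le 1$ are immediate from the closed form of $h^*$. For the remaining requirement $r_i(0)\ge R_{i-1}(0)$: when $i\ge 3$ one has $R_{i-1}(0)=\frac{1}{2-r^*_{i-1}(M)}=r^*_i(M)=r_i(0)$ by \eqref{eqn:rstareq}, so equality holds; when $i=2$, $R_1(0)=\frac{1}{2-h^*_1(M)/N}$ and the inequality $r^*_2(M)\ge R_1(0)$ reduces, after clearing (positive) denominators, to $M(N-1)\ge 0$, which holds since $N>1$.

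Next I would dispatch the two remaining defining conditions. The hypothesis gives $r_2(T)=h_2(T)\ge h^*_1(N)=R_1(T)$, and since $h^*_1(x)$ is decreasing in $x$ and $N\le M$ we get $h^*_1(N)\ge h^*_1(M)=h_1(0)$; hence $r_2(T)\ge\max\{R_1(T),h_1(0)\}$. For the condition $r_i(T)\ge r_{i-1}(0)=r^*_{i-1}(M)$ with $i\ge 3$, I would first prove the stronger ``moreover'' claim $r_i(T)\ge r^*_{i-1}(N)$. Since $h^*(M)$ is regularly ordered and $r_2(T)\ge\frac{1}{2-1/N}$, Lemma \ref{lem:decreasingr} applies and yields $r_i(t)\ge R_{i-1}(t)$ on $[0,T]$ for all $i\ge 2$; at $t=T$ this reads $r_i(T)\ge\frac{1}{2-r_{i-1}(T)}$ for $i\ge 3$. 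Writing $\phi(x):=\frac{1}{2-x}$, which is increasing on $[0,1]$, a one-line computation gives $\phi(h^*_1(N))=r^*_2(N)$, while \eqref{eqn:rstareq} gives $\phi(r^*_j(N))=r^*_{j+1}(N)$ for $j\ge 2$; starting from $r_2(T)\ge h^*_1(N)$ and applying $\phi$ repeatedly, induction on $i$ yields $r_i(T)\ge r^*_{i-1}(N)$ for all $i\ge 3$. Because $r^*_{i-1}(\cdot)$ is decreasing in its argument and $N\le M$, this gives $r_i(T)\ge r^*_{i-1}(N)\ge r^*_{i-1}(M)=r_{i-1}(0)$, the last defining condition. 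The final displayed inequality then follows by telescoping: $h_i(T)=h_1(T)\prod_{j=2}^i r_j(T)\ge h^*_1(N)\prod_{j=3}^i r^*_{j-1}(N)=h^*_1(N)\cdot\frac{h^*_{i-1}(N)}{h^*_1(N)}=h^*_{i-1}(N)$.

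The main obstacle, modest as it is, is spotting the two algebraic identities that make the recursion close up: $R_1(T)=h^*_1(N)$ (the reset value $h_1(T)=1$ feeds through $R_1$ to produce precisely $h^*_1(N)$) and $\frac{1}{2-h^*_1(N)}=r^*_2(N)$ (so that $h^*_1(N)$ plays the role of ``$r^*_1$'' in the fixed-point recursion \eqref{eqn:rstareq}). Once these are in hand, the rest is bookkeeping: the monotonicity of $h^*_i$ and $r^*_i$ in the parameter, the monotonicity of $\phi$ on $[0,1]$, and a single appeal to Lemma \ref{lem:decreasingr}, whose hypotheses we have already arranged to hold.
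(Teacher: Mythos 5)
Your proof is correct and follows essentially the same route as the paper's: verify regular ordering, use the hypothesis to discharge the $r_2(T)\ge\max\{R_1(T),h_1(0)\}$ condition, reduce the remaining condition to $r_i(T)\ge r^*_{i-1}(N)$ via monotonicity of $r^*_{i}(\cdot)$ in the parameter, and induct using Lemma \ref{lem:decreasingr} together with the recursion \eqref{eqn:rstareq}. You are in fact slightly more careful than the paper at two points it glosses over — the explicit check that $h^*(M)$ is regularly ordered for the $N$-system when $N<M$, and the observation that $h_1(0)=h^*_1(M)\le h^*_1(N)=R_1(T)$ rather than the paper's literal (and only-when-$M=N$) claim $R_1(T)=h_1(0)$.
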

\begin{IEEEproof}
It follows from \eqref{eqn:rstareq} that $h_i^*(M)$ is regularly ordered.
To prove it is fixed point convergent for $N$ we need to show that
$r_2(T) \ge \max\{ R_1(T),h_1(0) \},$ and that
$r_i(T) \ge r_{i-1}(0)$  for  $i \ge 3.$
We assume that $r_2(T) = h_2(T) \ge \frac{1}{2-\frac{1}{N}} = R_1(T) = h_1(0)$
so to prove it is fixed point convergent for $N$ we need only show that
$r_i(T) \ge r_{i-1}(0)$  for  $i \ge 3.$
Since $r_{i-1}(0) =r^*_{i-1}(M) \le r^*_{i-1}(N)$ we see that it
is sufficient to show $r_i(T) \ge r_{i-1}^*(N)$ for all $i \ge 3.$

We have by assumption that $r_2(T) \ge r_1^*(N).$
We proceed by induction.
Hence assume that $r_i(T) \ge r^*_{i-1}(N).$ 
It follows from Lemma \ref{lem:decreasingr} that
$r_i(t)$ is decreasing and $r_i(t) \ge R_{i-1}(t)$ for all $i \ge 2.$
Hence  $r_{i+1}(T) \ge R_i(T) = \frac{1}{2-r_i(T)} \ge \frac{1}{2-r^*_{i-1}(N)} = r^*_{i}(N).$
\end{IEEEproof}

\begin{lemma}\label{lem:NMbounds}
There exists $\delta>0$ such that if $N\in [2,6]$ then 
the initial condition $h_i^*(M)$ 
is fixed point convergent for $N$ for all $M \in [N,N+\delta].$
If $N \ge 6$ then the initial condition $h_i^*(\infty)$ is fixed point convergent.
\end{lemma}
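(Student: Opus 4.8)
The plan is to reduce the claim, via Lemma~\ref{lem:Rinit}, to a one–variable inequality in $N$ and then verify that inequality by solving a two–dimensional ODE in closed form. In both cases of the statement $M\ge N$, so by Lemma~\ref{lem:Rinit} it suffices to prove that, when the fluid system is initialized with $h_i(0)=h^*_i(M)$ and run with dynamics parameter $N$ (so $h_0\equiv N$), the first relay's coverage at its decoding time $T=T_1$ satisfies $h_2(T)\ge h^*_1(N)=\tfrac1{2-1/N}$. On $[0,T]$ we have $\dot h_1=1-h_1/N>0$ and $\dot h_2=1-h_2/h_1$, so I would reparametrize by $h_1$: the $h_2$–equation becomes the linear ODE $\frac{dh_2}{dh_1}=\frac{1-h_2/h_1}{1-h_1/N}$ with integrating factor $\frac{h_1}{N-h_1}$ (using $\frac{N}{h_1(N-h_1)}=\frac1{h_1}+\frac1{N-h_1}$). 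Integrating from $a:=h^*_1(M)$, where $h_2=h^*_2(M)$, up to $h_1=1$ yields the closed form
\[
h_2(T)=\frac{(N-1)\,a\,h^*_2(M)}{N-a}+\frac{N^2(1-a)}{N-a}+N(N-1)\ln\frac{N-1}{N-a}\,.
\]

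For $N\ge 6$ take $M=\infty$, i.e.\ $a=\tfrac12$ and $h^*_2(\infty)=\tfrac13$. Substituting, subtracting $h^*_1(N)$, cancelling the positive factor $N-1$, and writing $y=\frac1{2N-1}\in(0,\tfrac1{11}]$, the inequality $h_2(T)\ge h^*_1(N)$ becomes (up to a positive multiple) $\tfrac{2y^2}{3}+(1+y)\bigl(y+\ln(1-y)\bigr)\ge 0$. Since $y+\ln(1-y)=-\sum_{k\ge2}y^k/k$ and $\sum_{k\ge2}y^k/k\le\tfrac{y^2}{2}+\tfrac{y^3}{3(1-y)}$, it is enough that $\tfrac23\ge(1+y)\bigl(\tfrac12+\tfrac{y}{3(1-y)}\bigr)$; the right side increases in $y$ and equals $\tfrac{32}{55}<\tfrac23$ at $y=\tfrac1{11}$, which settles $N\ge 6$.

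For $N\in[2,6]$ I would first take $M=N$. Then $a=\tfrac N{2N-1}$, $h^*_2(N)=\tfrac N{3N-2}$, the closed form collapses to $h_2(T)=\tfrac N{2(3N-2)}+\tfrac N2+N(N-1)\ln\tfrac{2N-1}{2N}$, and expanding $\ln(1-x)+x=-\sum_{k\ge2}x^k/k$ at $x=\tfrac1{2N}$ gives
\[
h_2(T)-h^*_1(N)=\frac{(N-1)^2}{(3N-2)(2N-1)}-N(N-1)\sum_{k\ge2}\frac1{k(2N)^k}\,.
\]
The bound $\sum_{k\ge2}x^k/k<\frac{x^2}{2(1-x)}$ (strict, as the $k\ge 3$ terms are positive) gives $N\sum_{k\ge2}\frac1{k(2N)^k}<\frac1{4(2N-1)}$, hence $h_2(T)-h^*_1(N)>\frac{N-1}{2N-1}\bigl(\frac{N-1}{3N-2}-\frac14\bigr)=\frac{(N-1)(N-2)}{4(2N-1)(3N-2)}\ge 0$ for every $N\ge 2$; thus $h_2(T)>h^*_1(N)$ strictly at $M=N$. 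Since $N\mapsto h_2(T)-h^*_1(N)$ is continuous and strictly positive on the compact set $[2,6]$, it has a positive minimum $c_0$; and the closed form makes $(N,M)\mapsto h_2(T)$ real-analytic, hence uniformly continuous, on $[2,6]\times[2,7]$. Choosing $\delta\in(0,1)$ with $|h_2(T)(N,M)-h_2(T)(N,N)|<c_0$ whenever $|M-N|\le\delta$ then gives $h_2(T)(N,M)>h^*_1(N)$ for all $M\in[N,N+\delta]$ and $N\in[2,6]$, completing the proof.

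The explicit integration and the two algebraic collapses are routine; the delicate part is the estimation of the logarithmic remainders in the two scalar inequalities — in particular the margin $\tfrac23-\tfrac{32}{55}=\tfrac{14}{165}$ for the $N\ge 6$ case is small, so the elementary tail bound on $\sum_{k\ge2}y^k/k$ must be taken tightly enough (and one should keep track of where exactly the threshold kicks in). The compactness/continuity step extending $M=N$ to $M\in[N,N+\delta]$, and the separate (and easier) case $N=\infty$, are standard.
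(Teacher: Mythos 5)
Your proposal is correct and follows the same strategy as the paper: reduce via Lemma~\ref{lem:Rinit} to the single condition $h_2(T)\ge h_1^*(N)$, verify it in closed form at $M=\infty$ for $N\ge 6$ and at $M=N$ for $N\in[2,6]$, and extract the $\delta$ by continuity on the compact set. Your closed form for $h_2(T)$, obtained by an integrating factor in the $h_1$-parametrization, is exactly the paper's $Q_2(T)/Q_1(T)$ after substituting $e^{T/N}=\frac{N-a}{N-1}$, and both of your scalar verifications check out. Where you genuinely differ is in the elementary estimates at the end. For $N\ge6$ the paper bounds $N^2(e^{T/N}-1-T/N)\ge \tfrac12 T^2$ and then asserts (without proof) a monotonicity property of $T(N)+T(N)^2-\tfrac1{N-1}$ plus a check at $N=6$; your substitution $y=\tfrac{1}{2N-1}$ turns the condition into $\tfrac23\ge(1+y)(\tfrac12+\tfrac{y}{3(1-y)})$, whose right side is manifestly increasing, so only the evaluation $\tfrac{32}{55}<\tfrac23$ at $y=\tfrac1{11}$ is needed. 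For $N\in[2,6]$ the paper truncates a series at $k=3$ and invokes a cubic lower bound on $T(N)$ ``verified with some algebra,'' whereas your exact identity $h_2(T)-h_1^*(N)=\frac{(N-1)^2}{(3N-2)(2N-1)}-N(N-1)\sum_{k\ge2}\frac{1}{k(2N)^k}$ together with the geometric tail bound yields the clean margin $\frac{(N-1)(N-2)}{4(2N-1)(3N-2)}$, strictly positive (by strictness of the tail bound) for every $N\ge2$, not just $[2,6]$. Your version of the computational core is tighter and more self-contained than the paper's; the reduction, the case split, and the perturbation-in-$M$ step are identical.
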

\begin{proof}
By Lemma \ref{lem:Rinit} $h^*_i(M)$ is fixed point convergent for $N$
if $h_2(T) \ge h^*_1(N).$
We can obtain conditions for this using the solution obtained in terms of the functions $Q_i(t).$
In particular we have
\begin{align*}
Q_0(T)&=e^{\frac{T}{N}} 
\\
Q_1(T)&=h^*_1(M) + N(e^{\frac{T}{N}}-1)
\\
Q_2(T)&=h^*_1(M) h^*_2(M) + h^*_1(M) T+ N^2(e^{\frac{T}{N}}-1-\frac{T}{N})
\end{align*}
and $T = T(N)$ is determined by $Q_1(T) = Q_0(T).$
We have $h_2(T) = Q_2(T)/Q_1(T)$ so the initial condition is fixed point convergent
if 
\[
Q_2(T) \ge Q_1(T)  h_1^*(N)
\]
under the condition $Q_1(T) = Q_0(T),$ which determines $T.$

The condition $Q_1(T) = Q_0(T)$ gives 
\[
Q_0(T)=Q_1(T) = e^{\frac{T}{N}}=\frac{N-h^*_1(M))}{N-1} = 1+\frac{1-h^*_1(M)}{N-1}
\]
hence
\[
T = N \ln \bigl(1+\frac{1-h^*_1(M)}{N-1} \bigr)
\]

Let us first consider the case $M=\infty.$ We have $h^*_1(M)=1/2$ and $h^*_2(M)=1/3$ and
$e^\frac{T}{N} = \frac{2N-1}{2(N-1)}.$  Using $N^2(e^{\frac{T}{N}}-1-\frac{T}{N}) \ge \frac{1}{2}T^2$
we see that $h_i^*(\infty)$ is fixed point convergent for $N$ if
\begin{align*}
\frac{1}{6}+\frac{1}{2}T +\frac{1}{2}T^2  \ge \frac{N}{2(N-1)} = \frac{1}{2}+\frac{1}{2N-2}
\end{align*}
Some algebra shows that $T(N)+T(N)^2-\frac{1}{N-1}$ is increasing in $N$ on $[2,\infty)$
(although we only use this for $N\ge 6$)
and we see that if the equality holds for $N=N'$ then it holds
for all $N \ge N'.$
It can be easily verified that the inequality holds for $N=6.$

It is clear that $h_1(0)$ and $h_2(0)$ are uniformly continuous in $M$ on $[2,6].$ 
Thus, to obtain the desired result we need only verify that $Q_2(T) > e^{\frac{T}{N}} \frac{1}{2-\frac{1}{N}}$
when $M=N$ for $N \in [2,6].$ Noting that for $N=M$ we have $h_1(0) = \frac{1}{2-\frac{1}{N}}$ we can, by dividing through
by $h_1(0)$ and rearranging terms, write the condition to be shown as as
\begin{align*}
  (1-\frac{1}{N}) T
+ \biggl(2-\frac{1}{N}-\frac{1}{N^2}\biggr) \sum_{k=2}^\infty \frac{1}{N^{k-2}}\frac{T^k}{k!}
> \frac{2N-2}{3N-2}
\end{align*}
where in this case we have $T(N) = N\ln (1+ \frac{1}{2N-1}).$  This inequality holds on $[2,6]$ taking only the $k=2$ and
$k=3$ terms from the sum, which can be verified with some algebra using the bound for $N \ge 2,$
\[
T(N) \ge N \Bigl(\frac{1}{2N-1} - \frac{1}{2(2N-1)^2} + \frac{1}{4(2N-1)^3}\Bigr)\,.
\] 
\end{proof}

\begin{IEEEproof}[Main Proof of Theorem \ref{T:fluid bound} - lower bound]
We assume $N\ge 2.$

We can obtain a lower bound by showing that for some $t=\tau$ we have
$h(\tau;N) \ge \hat{h}$ where $\hat{h}$ is fixed point convergent (for the given $N.$)
To see why this produces a lower bound with the desired property consider
initializing the system with $\hat{h}.$ By Lemma \ref{lem:wellsetiterate} the resulting solution 
$\hat{h}(t;N)$ then satisfies $\lim_{k\rightarrow \infty} \hat{h}^{[k]}(N) = \hfixed(N)$
with $\hat{T}^{[k]}$ a decreasing sequence that approaches $\Tfixed(N).$
It follows that for any $\epsilon >0$ we have $\hat{T}_k \le k (\Tfixed(N)+\epsilon)$
for $k$ large enough.
Now, by Lemma \ref{lem:monotonic}, $\hat{h}(t;N) \le h(\tau+t;N)$ so $T_k \le \tau+\hat{T}_k.$
Since $\epsilon$ is arbitrary we now obtain 
$\limsup_{k\rightarrow \infty} \frac{1}{k} T_k \le \Tfixed(N)$
which together with the lower bound gives the result.

By Lemma \ref{lem:Nspeedup} we see that it is sufficient to find $\hat{h}(t,N')$ that is fixed point convergent for $N$
for any $N' \ge N.$
For $N\ge 6$ we have $h_i(T_1(\infty);\infty) = \frac{1}{i}$ is fixed point convergent by Lemma \ref{lem:NMbounds}.
Since $h_i(T_1(N);N) \le h_i(T_1(\infty);\infty)$ for $N<\infty$
by Lemma \ref{lem:decreasinginN} the proof is complete for $N\ge 6.$

From Lemma \ref{lem:Rinit} it follows that for some $\tau$ we have $h_i(\tau;N=6) \ge h^*_i(6).$
This gives a fixed point convergent sequence for $N\in [6-\delta,6].$
Similarly, it now follows from Lemma \ref{lem:Rinit} that for some $\tau$ we have $h_i(\tau;N=6-\delta) \ge h^{*}_i(6-\delta).$
This gives a fixed point convergent sequence for $N\in [6-2\delta,6-\delta].$
Proceeding by induction we obtain fixed point convergent conditions for all $N\ge 2.$
%
%
%

We conclude by noting that the lower bound and upper bound are asymptotically equal.
\end{IEEEproof}
\subsection{Sketch of Proof of Proposition\ref{P: proportioned fluid bound}}
As before there occurs a sequence of times $T_1,T_2,...$
where $T_i$ denotes the time $t$ where $h_i(t)$ reaches $1.$
For $t \le T_1$ we have
\[
h_1(t) = h_1(0)+t
\]
We obtain for $i>1,$
\begin{equation}\label{eqn:hPRdiffeq}
e^t h_i(t) = \int_0^t e^t h_{i-1}(s) ds +  h_i(0)\,.
\end{equation}

Defining $r_i := \frac{h_i}{h_{i-1}}$ for $i \ge 2$ we get
\begin{equation}\label{eqn:revolvePR}
\frac{d}{dt} r_i(t) = 1 - \frac{r_i(t)}{r_{i-1}(t)}\,.
\end{equation}
where we also define $r_1(t) := \frac{h_1(t)}{h_1(t)+1}.$

Let us consider the initialization $h_i(0)=0, i\ge 1$ and the interval $[0,T_1].$
It is easy to see from \ref{eqn:hPRdiffeq} that $r_i(t)$ approaches $0$
as $t\rightarrow 0.$

\begin{lemma}
If $r_i(0) \le \frac{1}{2}$ for $i \ge 2$ then
we have $r_i(t) \le \frac{1}{2}$ for  $i \ge 1$ and  $t \in [0,T].$
\end{lemma}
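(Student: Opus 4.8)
The plan is to prove the bound by induction on $i$, combining the differential equation \eqref{eqn:revolvePR} with a one-sided invariance (barrier) argument at the threshold $\frac{1}{2}$.

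First I would dispose of the base case $i=1$. On $[0,T]$ we have $h_1(t)=h_1(0)+t\le 1$, because $T=T_1$ is by definition the time at which $h_1$ reaches $1$; and $r_1(t)=\frac{h_1(t)}{h_1(t)+1}$. Since $x\mapsto\frac{x}{x+1}=1-\frac{1}{x+1}$ is increasing on $[0,\infty)$ and equals $\frac{1}{2}$ at $x=1$, we get $r_1(t)\le\frac{1}{2}$ throughout $[0,T]$.

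Next, for the inductive step I would fix $i\ge 2$, assume $r_{i-1}(t)\le\frac{1}{2}$ on $[0,T]$ (the case $i=2$ being the base case just established), and use $r_i(0)\le\frac{1}{2}$ from the hypothesis. The heart of the argument is the sign computation: whenever $r_i(t)\ge\frac{1}{2}$, equation \eqref{eqn:revolvePR} gives
\[
\frac{d}{dt}r_i(t)=1-\frac{r_i(t)}{r_{i-1}(t)}\le 1-\frac{1/2}{r_{i-1}(t)}\le 1-\frac{1/2}{1/2}=0 .
\]
So the region $\{r_i\le\frac{1}{2}\}$ is forward-invariant: if $r_i(t_1)>\frac{1}{2}$ for some $t_1\in[0,T]$, set $t_0=\sup\{t<t_1 : r_i(t)\le\frac{1}{2}\}$, which is well-defined since $r_i(0)\le\frac{1}{2}$; then $r_i(t_0)=\frac{1}{2}$ and $r_i>\frac{1}{2}$ on $(t_0,t_1]$, so by the displayed inequality $r_i$ is non-increasing on $(t_0,t_1]$, whence $r_i(t_1)\le r_i(t_0)=\frac{1}{2}$, a contradiction. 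Hence $r_i(t)\le\frac{1}{2}$ on $[0,T]$, and the induction closes.

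The only delicate point is that the vector field in \eqref{eqn:revolvePR} contains the ratio $r_i/r_{i-1}$, which is meaningful only where $r_{i-1}>0$. For the initialization $h_i(0)=0$ one has $r_j(t)\to 0$ as $t\to 0^+$ (as noted just above the statement), so the claimed inequalities hold trivially on a neighbourhood of $0$ while $r_{i-1}(t)>0$ for $t>0$, and the barrier argument applies verbatim on the rest of $[0,T]$; for general admissible initial data $r_{i-1}$ stays positive on $[0,T]$ and there is nothing extra to check. I do not expect any real obstacle here: the proof is essentially the one-line derivative-sign estimate above together with a routine continuity argument, exactly parallel to the monotonicity lemmas used earlier for the unproportioned system.
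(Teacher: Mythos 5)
Your proof is correct and takes essentially the same route as the paper's: establish $r_1(t)\le\frac{1}{2}$ from $h_1(t)\le 1$ on $[0,T]$ and then induct on $i$ using the sign of the right-hand side of \eqref{eqn:revolvePR}. You have simply made explicit the forward-invariance (barrier) step that the paper compresses into the phrase ``by induction.''
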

\begin{IEEEproof}
Consider any initial condition $r_i(0) \le \frac{1}{2}.$
Then on $[0,T]$ we have $r_1(t) \le \frac{1}{2}$ and so
\eqref{eqn:revolvePR} implies
$r_i(t) \le \frac{1}{2}$ for $i=2,3,...$ by induction.
\end{IEEEproof}

The above Lemma shows that $h_i$ decays rapidly in $i.$ This also implies that the 
probability that a transmission will be successful is at least $\frac{1}{2}$  so
the ratio of wasted to useful transmissions is finite.

Given an initial condition we can obtain a solution as
\[
h_1(t) = h_1(0)+t
\]
and for $i \ge 1,$
\[
e^{t}h_i (t) =\int_0^t e^s h_{i-1}(s) ds + h_i(0)\,.
\]
Even though we have a simple recursive form it appears difficult to prove convergence.
Simulations indicate that $h^{[k]}$ is an increasing sequence that therefore converges.

\subsection{Fixed Points}\label{A:proportion forwarding}

Consider $M_k(t) = \sum_{i=k}^\infty h_i(t).$
Since $h_i(t)$ goes to $0$ in $i$ we have for $k \ge 2,$
\[
\frac{d}{dt}M_k = h_{k-1}
\]
and at a fixed point we have
\[
\int_0^T h_{k-1}(s) ds = M_k(T) - M_k(0) = M_{k-1}(0) - M_k(0) = h_{k-1}(0).
\]
In particular setting $k=2$ we obtain gives the necessary condition
\[
\int_0^T h_{1}(s) ds = h_{1}(0) T + \frac{1}{2}T^2 = h_{1}(0)
\]
and since $h_1(0) = 1-T$ we can solve to obtain $T = \sqrt{2}.$
Since this determines $h_1(t)$ we can then in principle solve for
$h_i(0), i=2,3,...$ using
\[
\int_0^T h_{i}(s) ds = h_{i}(0) 
\]
and \eqref{eqn:hPRdiffeq}.

Analysis of the long term evolution of $M_1^{[k]}(0)$ shows that the asymptotic speed up is at least $5/3.$

In general we have
\begin{align*}
M_1^{[k+1]}(0) & = M_1^{[k ]}(T^{[k]}) - 1 \\
 & = M_1^{[k]}(0) + \int_0^{T^{[k]}} h^{[k]}_1(s) ds + (1-h^{[k]}_1(0))- 1 \\ 
 & = M_1^{[k]}(0) + 2T^{[k]} -\frac{1}{2}(T^{[k]})^2 - 1
\end{align*}
Let us define $p(T) = 2T -\frac{1}{2}(T)^2 - 1.$
Since $0 \le M_1^{[k]}(0) \le 1$ we easily obtain
$|\sum_{k=1}^i p (T^{[k]}) | \le 1.$
We have $T^{[k]} \ge \frac{1}{2}.$
Since $p(T) = 2T -\frac{1}{2}(T)^2 - 1$ is concave increasing on $[\frac{1}{2},1]$ 
we see that for any probability distribution of $T$ on  $[\frac{1}{2},1]$ 
the point $(\expectation (T),\expectation (P(T)))$ is below the graph of $p(T)$ and
above the line segment joining the endpoints.
Since $p(\frac{1}{2}) = -\frac{1}{8}$ and $p(1) = \frac{1}{2}$ 
any distribution of $T$ with $\expectation(P(T)) =0$
has $\expectation{(T)} \le \frac{3}{5}.$
Hence,
\begin{align*}
\limsup_{i \rightarrow \infty} \frac{1}{i}T_i  = 
\limsup_{i \rightarrow \infty} \frac{1}{i}\sum_{k=1}^i T^{[k]}  \le \frac{3}{5}.
\end{align*}
We also have the corresponding lower bound (which we conjecture is tight) of
\begin{align*}
\liminf_{i \rightarrow \infty} \frac{1}{i}\sum_{k=1}^i T^{[k]}  \ge 2-\sqrt{2} \simeq  0.5857...
\end{align*}
 
\end{document}